\documentclass[UKenglish, cleveref, autoref]{llncs}
\usepackage{algorithm}
\usepackage{algorithmic}
\usepackage{color}
\usepackage{graphicx}
\usepackage{amsmath, amssymb}
\usepackage{ulem}
\usepackage{soul}
\usepackage{hyperref}

\bibliographystyle{splncs04}%

\newcommand{\G}{\ensuremath{{\mathcal G}}}
\newcommand{\T}{\ensuremath{{\mathcal T}}}
\newcommand{\D}{\ensuremath{{\mathcal D}}}
\newcommand{\thick}{{\sf thick}}

\newcommand{\embed}[1]{{\sf embed}(#1)}
\newcommand{\crop}{{\sf crop}}

\newcommand{\territory}{{\sf territory}}
\newcommand{\parent}{{\sf parent}}
\newcommand{\child}{{\sf child}}
\newcommand{\orient}{{\sf orient}}
\newcommand{\diam}{{\sf diameter}}

\renewcommand{\le}{\leqslant}
\renewcommand{\ge}{\geqslant}
\renewcommand{\geq}{\geqslant}
\renewcommand{\leq}{\leqslant}

\newcommand{\para}[1]{\vspace{0.2em}\noindent\textbf{#1.}~}

\newenvironment{proofsketch}{%
  \proof}{\endproof}

\newboolean{short}
\setboolean{short}{false}

\newcommand{\shortOnly}[1]{\ifthenelse{\boolean{short}}{#1}{}}
\newcommand{\onlyShort}[1]{\ifthenelse{\boolean{short}}{#1}{}}
\newcommand{\longOnly}[1]{\ifthenelse{\boolean{short}}{}{#1}}
\newcommand{\onlyLong}[1]{\ifthenelse{\boolean{short}}{}{#1}}

\begin{document}
\title{Guarding A Polygon Without Losing Touch}
\author{
Barath Ashok\inst{1} \and
John Augustine\inst{1} \and
Aditya Mehekare\inst{2} \and
Sridhar Ragupathi\inst{2} \and
Srikkanth Ramachandran\inst{2} \and
Suman Sourav\inst{3}
}
\institute{
Indian Institute of Technology, Madras \and
National Institute of Technology, Tiruchirappalli \and
Advanced Digital Sciences Center, Singapore
}

\maketitle
\begin{abstract}
We study the classical {\it Art Gallery Problem}  first  proposed by Klee in 1973 from a mobile multi-agents perspective. Specifically, we require an optimally small number of agents (also called guards) to navigate and position themselves in the interior of an unknown simple polygon with $n$ vertices such that the collective view of all the agents covers the polygon. 

We consider 
the {\it visibly connected setting} wherein agents must remain connected through line of sight links -- a requirement particularly relevant to multi-agent systems.  
We first provide a centralized algorithm for the visibly connected setting that runs in time $O(n)$, which is of course optimal. We then provide algorithms for two different distributed settings. 
In the first setting, agents can only perceive relative proximity (i.e., can tell which of a pair of objects is closer) whereas they can perceive exact distances in the second setting.
Our distributed algorithms work despite agents having no prior knowledge of the polygon. 
Furthermore, 
we provide lower bounds to show that our distributed algorithms are near optimal. 

Our visibly connected guarding ensures that (i)  the guards form a connected network and (ii) the polygon is fully guarded. Consequently, this guarding provides  the distributed infrastructure to execute any geometric algorithm for this polygon.

\keywords{Art Gallery Problem, Mobile Agents, Swarm Robotics, Visibility, Line of Sight Communication}

\end{abstract}

\section{Introduction}

The {\it Art Gallery Problem} is a classical computational geometry problem that seeks to minimize the number of guards (or agents in our context) required to guard an art gallery (represented by a simple polygon $P$ comprising  vertices $\{p_1, p_2, \ldots, p_n\}$). To successfully guard the gallery, every point inside the polygon must be visible to at least one guard, i.e., for every point in the polygon, there must exist at least one guard such that the segment joining the point to the guard does not  intersect the exterior of the polygon. 
This computational geometry problem was first posed by Klee in 1973, and thereafter has been widely studied over the years (see \cite{O'Rourke:1987:AGT:40599,163407,URRUTIA2000973,G07}). 

In this paper, we investigate  a variation of the problem called the {\it visibly connected art gallery problem} from a distributed multi-agents perspective. We require an optimally small number of agents (also called guards) with omni-directional vision to navigate an unknown simple polygon with $n$ vertices in a coordinated manner and position themselves in its interior such that the collective view of all the agents covers the polygon. Additionally, for the connected art-gallery problem (as in \cite{OGB11}), it is required that the agents maintain  line-of-sight connectivity. More precisely, the visibility graph~\cite{Liaw1993TheMC,OGB11} comprising agents as nodes and edges between agents that are within line of sight of each other (unobstructed by polygon edges) must be a connected graph.  

The visibly connected art gallery problem was studied as early as 1993 by Liaw {\it et al.}~\cite{Liaw1993TheMC} in the centralized setting, but only for the special case of spiral polygons. 
Hernandez-Penalver~\cite{hernandez1994controlling} considered simple polygons and showed that $\lfloor n/2 \rfloor-1$ guards are sufficient and sometimes necessary. Pinciu~\cite{P03} presented a  centralized algorithm based on iteratively processing the dual graph of the polygon's triangulation. Although~\cite{P03} lacks the analysis, one can infer that it runs in time linear in $n$, the number of vertices of the polygon. However, the algorithm is somewhat complicated and not amenable for parallel or distributed computing. Obermeyer et al. \cite{OGB11} provided a distributed algorithm that is capable of handling polygons with holes, but unfortunately requires $O(n^2)$ rounds. 

Our work is motivated by recent advancements in unmanned aerial vehicles (UAVs), especially those capable of automated sensing (either through photogrammetry or LiDAR) and communication (typically through line-of-sight electromagnetic radio waves). Such UAVs are typically deployed into unknown territories from which they are required to navigate, learn, and perform useful tasks in a coordinated manner. We model these UAVs as point agents that start  from a common starting point assumed without loss of generality (w.l.o.g.) to be $p_1$. Agents operate in synchronous rounds during which they can look, compute, communicate and move. They are required to coordinate with each other and achieve full visibility coverage of the polygon while maintaining line-of-sight connectivity with each other.
 
A connected visibility graph ensures that there exists a path between every pair of guards. So visibly connected guards can simultaneously maintain  coverage of the polygon and execute distributed computing protocols through line of sight communication. 

\enlargethispage{\baselineskip}
\para{Our Contributions}
We begin with a description of a centralized algorithm in Section~\ref{sec:central} that takes a simple polygon $P$ with $n$ vertices as input and produces a placement of at most $\lfloor n/2 \rfloor-1$ guards that satisfy the requirements of the visibly connected art gallery problem. This algorithm only requires $O(n)$ time. 
Here, we introduce a notion of  triplets (three connected nodes) in the weak dual graph $\D$ (defined formally in Section~\ref{sec:prelims}). Informally, $\D$ is the graph whose nodes are triangles of a triangulation of $P$ and arcs connect pairs of triangles that share an edge. We show that $\D$ can be decomposed into $O(n)$  triplets that are connected and cover $\D$, and then we compute a set of visibly connected guards by placing guards -- one per triplet -- positioned strategically within the triangles pertaining to each triplet.  Although Pinciu~\cite{P03} has already presented an $O(n)$ algorithm, we believe that our algorithm is simpler and,  more importantly, amenable to parallel computing. In particular, we show that our algorithm can be adapted to run in the PRAM model in time linear in the diameter of $\D$.

Subsequently, we turn our attention to distributed computing models in which the agents are independent mobile computing entities that must interact with each other to solve the visibly connected art gallery  problem. We define two model variants based on agents' perception capabilities. The {\it depth perception} variant wherein the agents can accurately perceive depth (i.e., distances) is inspired by UAVs with LiDAR technology~\cite{mcmanamon2019lidar}. On the other hand, the {\it proximity perception} variant only provides the agents with relative proximity. For concreteness in the model, we limit the proximity perception variant to being able to sense which of any two objects (i.e., edges or vertices of the polygon) is closer. It is inspired by photogrammetry~\cite{ABER2019}, which is cheaper and only guarantees coarser perception.  

We present algorithms for both cases. The algorithm for the proximity perception variant, presented in Section~\ref{sec:algo1}, operates by exploring visible territories within the polygon (formally defined in Section~\ref{sec:prelims}). We describe two forms of exploration, one in a breadth-first manner and the other in a depth-first manner. Since the polygon structure is completely unknown to the agents, for each level of the breadth-first exploration, the nodes must communicate to the root in order to ensure that a sufficient number of agents are provisioned to explore that level. Our algorithm -- taking the best out of both explorations --runs in $O(\min(\tilde{d}_v^2, n))$ rounds. The term $\tilde{d}_v$ is a natural notion of diameter of $P$ called {\it minimal v-diameter} that we define formally in Section~\ref{sec:algo1}.  Informally, it is the largest diameter among all visibility graphs pertaining to minimal placement of connected guards that cover all vertices in $P$. The candidate placements are minimal in the sense that the removal of guards either leads to lack of coverage or loss of connectivity.

When the agents can perceive depth, we exploit this capability to place agents based on the medial axis of the polygon $P$ (defined formally in Section~\ref{sec:prelims}), which is a well-known tree-like structure that captures the ``shape'' of the polygon. Since depth perception is more powerful than proximity perception, we can take the best of all options to ensure a running time of $O(\min(\tilde{d}_v^2, D^2, n))$ time, where $D$ is the (unweighted) diameter of the medial axis tree.
Our algorithms require at most $n$ agents for an initial placement, but a subsequent post-processing ensures that at most $\lfloor n/2 \rfloor-1$ agents are placed in the polygon, which is optimal. %

To complement our complexity claims, we consider the weaker problem wherein the robots are not required to be placed in a visibly connected guarding position, but rather just that the entire polygon must be explored by the agents. The exploration problem only requires that for every point in $P$, some agent must have been within line of sight of that point at some time instant during the course of the algorithm. Clearly, any solution to the visibly connected guard placement problem will also be a solution for the exploration problem. Thus, we focus on showing a lower bound for the exploration problem. Specifically, we show that, for any deterministic  algorithm to even centrally coordinate $n/4$ agents to explore an initially unknown polygon, we can construct a polygon %
that requires $\Omega(D^2)$ (or $\Omega(\tilde{d}_v^2)$) communication rounds even with depth perception.

\onlyShort{Unfortunately, due to space limitation, we have deferred some of our proofs to the long version, which is available in full in the Appendix.}

\para{Related Work} 
The classical art gallery problem was first introduced by Klee in 1973. Chv\'{a}tal \cite{CHVATAL197539} showed by an induction argument that $ \lfloor\frac{n}{3}\rfloor$ guards are always sufficient and occasionally necessary for any simple polygon with $n$ vertices. Fisk \cite{FISK1978374} proved the same result via an elegant coloring argument. Lee and Lin in \cite{1057165} proved that determining a set of minimum number of guards that can guard a given polygon is NP-hard. Consequently, researchers have focussed on approximate solutions starting from an $O(\log n)$ approximation provided by Ghosh~\cite{G87} in 1987, along with a conjecture that the problem admitted a polynomial time constant approximation algorithm. However, Eidenbenz et al.~\cite{Eidenbenz2001} showed that the problem was APX-hard, thereby precluding the possibility of a PTAS unless P=NP. After several improvements over the years, in 2017, Bhattacharya et al.~\cite{BGP17} have reported constant factor approximation algorithms for the classical art gallery problem as well as for several well-studied variants. 

In literature, based on the different restrictions placed on the shape of the galleries or the powers of the guards, several variations of ``art gallery problems" have been studied. See \cite{O'Rourke:1987:AGT:40599}, \cite{163407}, and \cite{URRUTIA2000973} for details.

The {\it connected art-gallery problem} %
was first introduced by Liaw et al. in 1993 \cite{Liaw1993TheMC}, where they refer to the problem as \textit{minimum cooperative guards problem} and study it on $k$-spiral polygons (polygons with a maximal chain of $k$ consecutive reflex vertices, i.e., vertices with internal angle $> 180$ degrees). %
It was also shown \cite{Liaw1993TheMC} that this problem is NP-Hard for  simple polygons but can be solved in linear time in spiral and $2$-spiral polygons (also see \cite{sarasamma} for results on $k$-spiral graphs). For simple polygons with $n$ vertices,  Hern\'{a}ndez-Penalver  \cite{hernandez1994controlling} proved by induction that $\lfloor\frac{n}{2}\rfloor -1$ guards are always sufficient to obtain a connected guarding. Moreover, they also show that $\lfloor\frac{n}{2}\rfloor -1$ guards are necessary for some polygons. The same result was also shown by Pinciu via an elegant coloring argument in  \cite{10.1007/3-540-45066-1_20}. 

In \cite{Liaw:1994:OAS:194679.194685}, Liaw et al. relax the strong connectivity condition from \cite{Liaw1993TheMC} and consider the case where %
there are no isolated vertices in the guards visibility graph. This problem of guarded guards where the overall connectivity of the guards visibility graph is non-essential has also been studied in \cite{Michael:2003:AGT:945968.945973,P03}.

In the distributed setting, Obermeyer et al. \cite{OGB11} study this problem in polygonal environment with holes.  They first design a centralized incremental partition algorithm (defined therein) and from that obtain the distributed deployment algorithm by a distributed emulation of the centralized algorithm. The authors give a deployment of agents that is guaranteed to achieve full visibility coverage of the polygon with $n$ vertices and $h$ holes in $O(n^2+nh)$ time, given that there are at least $\lfloor \frac{n+2h-1}{2}\rfloor$ agents. %
This work closely relates to our work. While the scope of \cite{OGB11} includes polygons with holes, their algorithm is not optimized for time and their ideas lead to algorithms that require $O(n^2)$ communication rounds even for simple polygons without holes. Notable prior works with ideas leading to~\cite{OGB11} can be found in \cite{10.1007/978-3-642-22935-0_18,1656416,4283034,Ganguli:2007:MCM:1415363}.

\onlyLong{Another interesting related work is Fekete et al. \cite{10.1007/978-3-642-22935-0_18} wherein they study the same problem with the added constraint that the guards have limited visibility. Specifically guards are visible to each other only if their distance is less than $r$. They study two problems, the first, Minimum Relay Triangulation Problem (MRTP), where they require minimum number of guards to be placed and the second, Maximum Area Triangulation Problem (MATP), where the number of guards is restricted to $n$ (the number of vertices of the polygon) and the objective is to cover maximum area of the polygon. They show that MRTP is NP-Hard and provide a PTAS for MATP. While their algorithm also maintains visible connectivity, more than $n$ guards maybe required because of the distance constraint. They show that offline MRTP and MATP problems are NP-hard and provide a PTAS for both. They also provide lower bounds for the competitive ratio of online MRTP problem ($6/5$) and a strategy that achieves $3$. For the online MATP problem they show that no constant lower bound exists for competitive ratio.}

\para{Organization of the Paper} In Section~\ref{sec:prelims}, we present some preliminary definitions including several geometric definitions pertaining to polygons as well as formal definitions of the distributed computing models. In Section~\ref{sec:central}, we provide a centralized algorithm to solve the visibly connected art gallery problem and then show how to parallelize it. In Section~\ref{sec:algo1} and Section~\ref{sec:algo2}, we present distributed algorithms under proximity perception and depth perception, respectively. We complement our upper bounds with a lower bound that is proved in Section~\ref{sec:lb}. We then conclude with some remarks and future works in Section~\ref{conclusion}.

\section{Preliminaries} \label{sec:prelims}
 Let $P$ be a simple polygon with $n \ge 4$ vertices; we use $P$ to refer to the polygonal region including both the interior and the boundary. In general, for a polygonal region $K \subseteq P$, we use $\partial K$ for the set of vertices of $P$ that lie on the boundary of this polygonal region. The ordered list of vertices of $P$ are denoted $p_1, p_2, \ldots, p_n$, and thus, $\partial P \triangleq \{p_1, p_2, \ldots, p_n\}$. Each open line segment connecting $p_i$ to $p_{i({\sf mod}\ n) + 1}$, $ 1 \le i \le n$, is denoted $e_i$. 
 We assume that the vertices are in general position, i.e., (i) no three vertices  are collinear, and (ii) no four vertices are co-circular.  We use the term {\it object} to refer to either a vertex or an edge. Thus, the objects of $P$ are $\{p_i\}_i \cup \{e_i\}_i$.

We use the notation $g \in P$ for some point $g$ to indicate that $g$ can either be a vertex, lie on an edge, or lie in the interior of $P$. Two points $g_1 \in P$ and $g_2 \in P$ are said to be in 
 line of sight of each other if the open line segment $\overline{g_1g_2}$ lies entirely within $P$. 
 We use $V_g^P$ (or just $V_g$ when $P$ is clear from context) to denote the visibility polygon of a point $g \in  P$, which is defined  as the subset of $P$ that contains all points that are in line of sight from $g$. See Figure~\ref{fig:visibility} for an illustration. 

We also borrow a useful definition from Obermeyer {\it et al.}~\cite{OGB11} for {\it vertex-limited visibility polygon} $\bar{V}_g^P$ for a point $g \in P$ w.r.t. $P$, which is a modified form of $V_g^P$. Notice that $V_g^P$ could have vertices that are not vertices in $P$; call such vertices {\it spurious vertices}. To get $\bar{V}_g^P$, perform the following operation repeatedly until there are no more spurious vertices: pick a spurious vertex $v$ with predecessor $p$ and successor $s$ and crop the visibility polygon by cutting along the line segment $\overline{ps}$ and removing the portion that lacks the point $g$ from further consideration. Note that either the predecessor or the successor may themselves be spurious. In Figure~\ref{fig:visibility}, note the first vertex $v$ that is clipped has successor $s$ that is itself a spurious vertex. An edge that is in $\bar{V}_g^P$ but not in $V_g^P$ is called a {\it gap edge}.
\begin{figure}[h]
	\centering
		\includegraphics[width=0.70\textwidth,page=18,clip=true,trim=0 140 25 100]{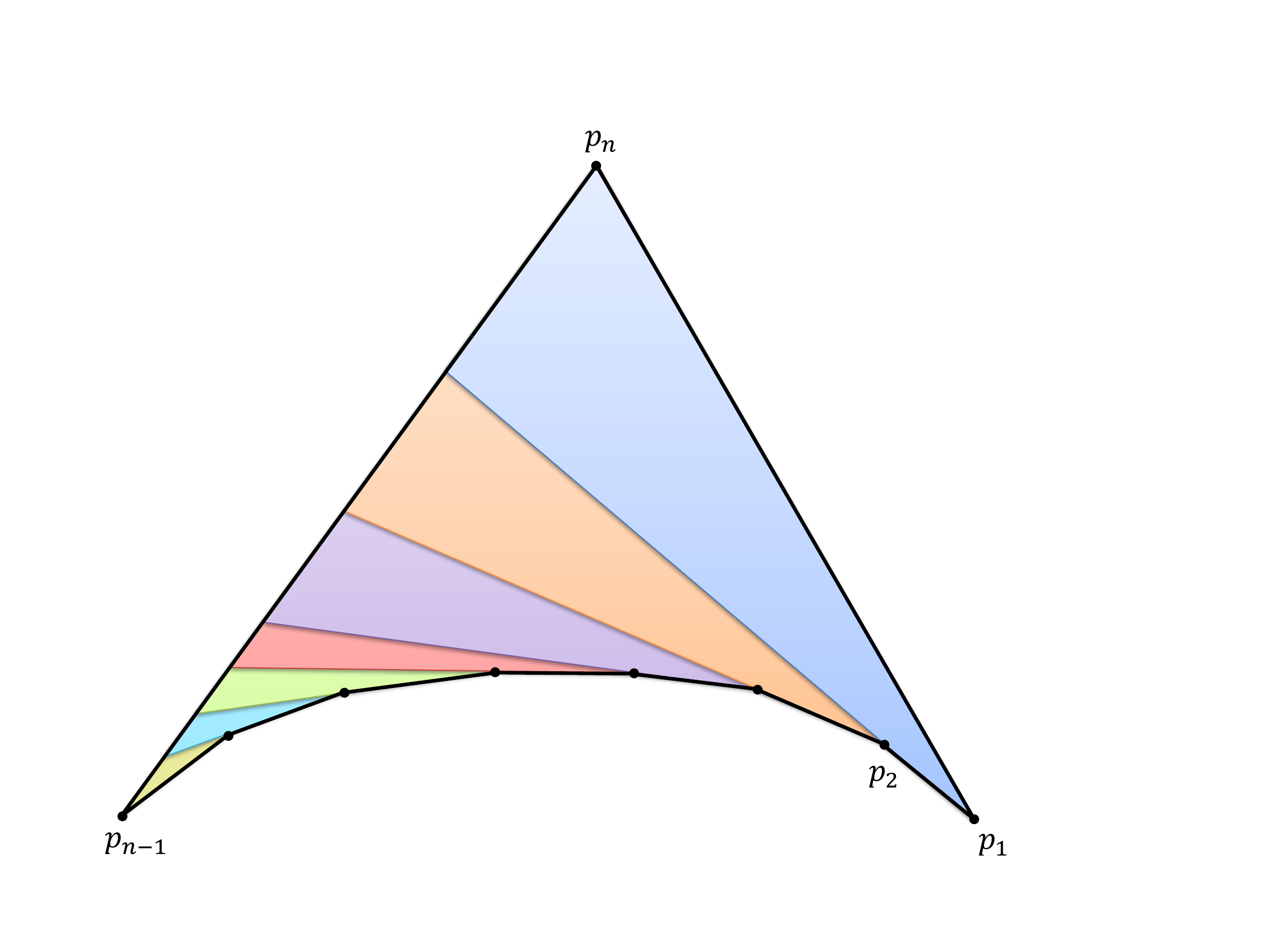}
	\caption{\small{Visibility polygon (left) and vertex-limited visibility polygon (at the center). The numbered line segments refer to one possible repeated sequence of cuts to arrive at vertex-limited visibility polygon. The polygon on the right depicts a cropped polygon.}}
	\label{fig:visibility}
\end{figure}
We also define a way to crop a polygon (cf. Figure~\ref{fig:visibility}). Formally, for any  pair of vertices $p_i$ and $p_j$ such that $\overline{p_i p_j}$ lies entirely within the interior of $P$ and $c \in P \setminus \overline{p_i p_j}$, we define  $\crop(P, c, p_i, p_j)$ to be the subset of $P$ obtained by cutting along $\overline{p_i p_j}$ and discarding the part that contains $c$.

\begin{definition}
Let $G = \{g_1, g_2, \ldots\}$ be a set of points in $P$. We say that the points in $G$ {\it guard polygon} $P$ if $\cup_{g \in G} V_g = P$.  In this context, we call the points in $G$ as guards of $P$.
\end{definition}

The classical art gallery problem seeks to find a smallest possible set $G$ of points that guard $P$, with variants including vertex and edge guarding.  
In this paper, we  consider a variant called the {\it connected art gallery problem} that, to the best of our knowledge, was introduced first by Liaw et al.~\cite{Liaw1993TheMC}. In this variant, guards are connected in a suitable way, which we now formalize. We define the {\it visibility graph} of a set of points $G$ within (and with respect to)  $P$, denoted $\G_G^P$ (or just $\G$ when clear from context) as the graph with vertex set $G$. Two points in $G$ are connected by an edge in $\G$ iff they are visible to each other within $P$.  A set $G$ of points in $P$ is said to be connected (w.r.t. $P$)  if $\G_G^P$ is connected.  
In the connected art gallery problem, we are required to compute a set $G$ of points that  guards $P$ {\it and} the additional requirement that $\G_G^P$ is connected. 
It is well-known~\cite{bcko08} that at most $\lfloor n/3 \rfloor$ guards are always sufficient to guard any polygon with $n$ vertices. However, this bound does not hold under  connected guarding.
\begin{claim}[consolidated from \cite{10.1007/3-540-45066-1_20} and \cite{P03}] \label{clm:lb}
There exist orthogonal polygons with $n$ vertices that require at least $ n/2 - 2$ number of connected guards even if we only require them to guard the vertices of the polygon~\cite{P03}. This bound  increases mildly to $\lfloor n/2 \rfloor -1$  
for simple polygons with non-orthogonal edges \cite{10.1007/3-540-45066-1_20}. 
\end{claim}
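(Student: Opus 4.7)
The plan is to establish both bounds by exhibiting explicit families of witness polygons and then arguing that no connected guarding of these polygons can be smaller than claimed. For the orthogonal bound $n/2-2$, I would build a ``comb'' polygon $P_k$ on $n\approx 4k$ vertices: a long horizontal rectangular corridor from which $k$ thin rectangular teeth protrude perpendicularly, all narrow enough that the tip vertex of each tooth is visible only from a small ``kernel'' region strictly inside that tooth. The teeth are placed far enough apart along the corridor that no single guard inside or at the mouth of one tooth can see the tip of any other tooth. This forces at least one guard per tooth, yielding $k$ guards just for vertex coverage.

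The next step is a connectivity argument. Because each tooth-guard must lie in its tooth's visibility kernel (which is entirely contained in the tooth), no two tooth-guards are mutually visible; moreover, adjacent tooth mouths can be arranged so that a line of sight between them is blocked by the intervening reflex corners on the top boundary of the corridor. Hence between every pair of consecutive tooth-guards one must place at least one additional ``spine'' guard inside the corridor to preserve connectedness of $\G_G^P$, contributing $k-1$ more guards, for a total of $2k-1$. A careful accounting of the vertices of $P_k$ (four per tooth plus a constant number on the corridor) gives $n = 4k + c$ for a small constant $c$, so $2k-1 \ge \lfloor n/2\rfloor - 2$ as required.

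For the non-orthogonal strengthening to $\lfloor n/2\rfloor - 1$, I would perturb the construction by tilting one rectangular tooth into a slightly non-orthogonal spike, removing one vertex from the construction without affecting either the one-guard-per-tooth argument or the spine argument. This saves a single vertex in the count and therefore shifts the bound up by one unit. Alternatively, I would follow the elegant $3$-colouring template of Pinciu~\cite{10.1007/3-540-45066-1_20}: triangulate the witness polygon, $3$-colour its triangulation so that any connected guarding must use at least two colour classes along a separating chain, and then charge vertices to guards to read off the bound.

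The main obstacle is the lower bound direction, namely ruling out \emph{any} clever placement that is smaller. The key lemma to establish is a visibility-kernel disjointness statement: for each tooth $T_i$ there is a set $K_i\subseteq T_i$ containing every point that sees the tip of $T_i$, and the $K_i$'s are pairwise mutually invisible. Given this, any guarding has at least one point in each $K_i$, and any superset of these points that induces a connected visibility graph must contain at least $k-1$ additional guards (one per ``gap''), because the visibility graph restricted to $\cup_i K_i$ is edgeless. Proving this lemma rigorously reduces to a planar geometry calculation on the corridor-plus-tooth configuration; the rest of the proof is a straightforward induction on $k$, closely following the arguments in \cite{10.1007/3-540-45066-1_20} and \cite{P03}.
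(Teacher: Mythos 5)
The paper does not actually prove this claim; it is stated as a consolidation of results cited from \cite{P03} and \cite{10.1007/3-540-45066-1_20}, so your proposal has to be judged against the constructions in those papers rather than against an in-paper argument. Your overall plan (a comb forcing one guard per tooth, plus extra guards for connectivity) is the right genre, but two steps are genuinely broken. First, the geometric premise fails for the polygon you describe: for an axis-aligned rectangular tooth sitting on a straight corridor, a tip corner is \emph{always} visible from a thin strip of the corridor directly below the tooth's mouth (the sight-line from the tip through the mouth simply continues into the corridor), so the ``kernel'' of a tooth is not contained in the tooth; worse, those corridor strips all see one another along the convex corridor, and a corridor point between teeth $i$ and $i+1$ can see into both kernels. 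Second, and more importantly, the inference ``the induced visibility graph on $\bigcup_i K_i$ is edgeless, hence $k-1$ additional guards are needed'' is invalid: an independent set of $k$ forced guards can in principle be connected by a \emph{single} additional guard that sees all of them, and even under the strongest premises your construction supports (each helper sees at most two consecutive kernels, helpers mutually visible along the corridor) the honest conclusion is only about $k+\lceil k/2\rceil$ guards, well short of $2k-1$. The cited constructions avoid this by making the regions $R_i$ of \emph{all} points that see any part of the $i$-th kernel pairwise disjoint (very thin tilted spikes, resp.\ orthogonal notches); then each forced guard is isolated unless it has a private neighbour inside its own $R_i$, which yields $2k$ guards lying in $2k$ pairwise disjoint sets, and the four-vertices-per-spike accounting gives $\lfloor n/2\rfloor-1$ (resp.\ $n/2-2$). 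Achieving disjointness of the $R_i$ at a cost of only four vertices per tooth is precisely the non-trivial content of the construction, and it is what your sketch assumes away.

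One further point: your fallback of ``following the elegant $3$-colouring template of Pinciu'' cannot establish this statement. Fisk-style colourings of a triangulation certify that some colour class \emph{suffices} as a guard set, i.e.\ they prove upper bounds; the claim here is a lower bound (necessity), which no colouring argument on a witness polygon can deliver on its own.
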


We now define several structures associated with any polygon $P$. A line segment joining two vertices is said to be a {\it diagonal} if its interior lies entirely within the interior of $P$. It is easy to see that a maximal set of diagonals that do not intersect each other decomposes the polygons into a set of $n-2$ triangles called a {\it triangulation} of $P$~\cite{bcko08}. A famous result by Chazelle~\cite{C91} shows us how to find such a triangulation in $O(n)$ time. Given a triangulation $T$ for a polygon $P$, the {\it weak dual graph} $\D_T^P$ (or just $\D$ when clear from context) is the graph (or more informatively, a tree) whose nodes are the triangles in $T$ with edges in $\D_T^P$ between pairs of triangles that share a common triangle edge. Note that the weak dual graph is a tree where each tree node has a degree of at most $3$. \onlyShort{An explanation as to why it is true has been postponed to the appendix.}

\onlyLong{For simple polygons, we can build a triangulation by dividing or splitting the polygon across some internal diagonal into two {\it disjoint} pieces (which in turn are simple polygons), recursively computing the triangulation for each split piece until the split pieces are also triangles. The input polygon's triangulation can be computed by gluing the two sub-solutions across this splitting diagonal. Notice that, by an appropriate sequence of choice of splitting diagonals at each step, any particular triangulation can be obtained as a solution. Moreover, this recursive procedure can be modified to obtain the {\it weak dual graph} as well, which, when combined with the fact that simple polygons always yield disjoint pieces when split across an internal diagonal gives us almost immediately that the {\it weak dual graph} is a tree. Additionally, since at most three triangles can share edges with a particular triangle in any given triangulation, it implies that the degree of any node of the {\it weak dual tree} is bounded by three.}

Recall that, the term {\it object} refers to either a vertex or an edge of the polygon $P$.
We define the medial axis $M$ of the polygon $P$ to be the (infinite) collection of points within $P$ that are equidistant from at least two distinct objects of $P$\onlyLong{(cf. Figure~\ref{fig:medial})}. 

\begin{claim}
For any simple polygon $P$, the  medial axis is a tree whose leaves are convex vertices of $P$, i.e., vertices with internal angle being less than 180 degrees.
\end{claim}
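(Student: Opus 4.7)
The plan is to prove the claim in two stages: first that the medial axis $M$ is a tree (connected and acyclic), then that its leaves are exactly the convex vertices of $P$.

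For the tree property, I would exhibit a deformation retraction from $P$ onto $M$. Every point $q$ in the interior of $P \setminus M$ has a unique nearest boundary object $o(q)$, so there is a well-defined outward ray from $o(q)$ through $q$; pushing $q$ along this ray until it first meets a second nearest object gives a continuous map onto $M$, and straight-line homotopy shows it is a deformation retract. Because $P$ is simply connected, so is any deformation retract, and so $M$ is simply connected. Combined with the standard structural description of $M$ as a finite union of line segments and parabolic arcs meeting at finitely many vertices (segments arise between two edges or two reflex vertices, parabolic arcs between an edge and a reflex vertex), simple-connectedness of this one-dimensional cell complex forces it to be a tree. I would cite this structural description as a standard fact about medial axes of simple polygons, rather than re-derive it.

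For the leaf characterization, the key is a local analysis at each vertex $v$ of $P$. If $v$ is convex (internal angle $\alpha < \pi$), then within a small neighborhood of $v$ inside $P$ the only pair of boundary objects whose distance functions coincide is the pair of adjacent edges $e_{i-1}, e_i$; their locus of equidistance is the interior angular bisector at $v$, a single smooth arc terminating at $v$. Hence exactly one arc of $M$ is incident to $v$, making $v$ a leaf. If instead $v$ is reflex (internal angle $> \pi$), then $v$ is itself a point-object appearing in the medial axis, and every direction into the polygon inside the reflex cone eventually enters a region whose nearest object is $v$. In particular, the two parabolic arcs bisecting $v$ against $e_{i-1}$ and against $e_i$ both terminate at $v$, so at least two medial arcs meet at $v$ and it is not a leaf. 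Finally I would argue that no interior point of $P$ can be a leaf: any endpoint of a medial arc in the interior must, by the structure above, either be a junction of at least three arcs (a Voronoi-type triple point where three bisectors meet) or lie on the boundary; a degree-one medial vertex in the interior would contradict the continuity of the distance-to-boundary function in a punctured neighborhood of the endpoint.

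The main obstacle will be making the deformation retraction rigorous near the singular set, particularly near reflex vertices and near triple points of $M$ where the "push outward from $o(q)$" prescription is not globally smooth; the cleanest workaround is to define the retraction piecewise on the closures of the Voronoi-like cells of $P \setminus M$ and invoke the gluing lemma, or alternatively to quote a standard result such as Chin, Snoeyink, and Wang's analysis of medial axes of simple polygons and its tree structure to subsume this step. The convex/reflex dichotomy at vertices is the genuinely geometric content and I would present it in full, since it is what pins down the leaf set.
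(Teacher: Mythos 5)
Your reflex-vertex analysis contains a genuine error. You assert that a reflex vertex $v$ is itself a node of the medial axis and that ``the two parabolic arcs bisecting $v$ against $e_{i-1}$ and against $e_i$ both terminate at $v$,'' so that $v$ has degree at least two. This is false --- and the paper states the opposite immediately after the claim: reflex vertices cannot be nodes of the medial axis at all. Two things go wrong. First, the bisector of $v$ against its own incident edge $e_i$ is not a parabolic arc: since $v$ lies on the closure of $e_i$, the focus sits on the directrix and the locus degenerates to the ray through $v$ perpendicular to $e_i$ --- the ``spoke'' separating the Voronoi cell of the site $v$ from that of $e_i$. Points on such a spoke are centers of discs touching the boundary only at the single point $v$, so they are not medial axis points; these spokes are exactly what distinguishes the medial axis from the Voronoi diagram of the boundary objects. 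Second, any genuine parabolic arc incident to $v$'s cell bisects $v$ against a non-adjacent object $o$; if a point $q$ on it tended to $v$, then $d(q,v)\to 0$ would force $d(q,o)\to 0$ and hence $d(v,o)=0$, contradicting non-adjacency. So the medial axis stays at positive distance from every reflex vertex (as the paper's sketch notes, no inscribed disc of positive radius can even be centered at $v$). Your desired conclusion --- that $v$ is not a leaf --- survives, but for the opposite reason: $v$ is not a vertex of the tree in the first place. As written, your argument would insert spurious degree-two nodes at every reflex vertex and leaves unexamined where the arcs bounding $v$'s cell actually terminate (at interior branch points), so this step must be replaced, not patched. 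Your convex-vertex analysis and the exclusion of interior degree-one endpoints are fine.

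On the tree property your route genuinely differs from the paper's: you deformation-retract $P$ onto $M$ and use simple connectivity of $P$ to force the one-dimensional complex to be acyclic, whereas the paper mirrors its weak-dual-graph argument, recursively splitting the polygon along diagonals and gluing the medial axes of the pieces via the domain decomposition lemma of Choi et al. Both are legitimate; yours is more topological and self-contained once the retraction is made rigorous near the spokes and branch points (which you correctly flag as the delicate step), while the paper's inductive decomposition sidesteps the retraction entirely at the cost of invoking the decomposition lemma as a black box.
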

\onlyLong{
\begin{proofsketch}
The proof follows in a similar vein in relation to the {\it weak dual graph is a tree} claim - except here, the decomposition of the domain into sub-domains and the consequent gluing together of the medial axis of these sub-domains to give the overall solution requires a considered application of the domain decomposition lemma (see section 5 of \cite{choi1997mathematical}). Moreover, no concave vertex $v$ can be part of the medial axis, since any disc centered at $v$ must necessarily intersect with the outside of the polygon which means that there can be no maximal disc that lies entirely within the polygon centered at $v$. And further, using this observation, we can argue that $v$ cannot be the limit point or closure of any branch of the medial axis as well. Additionally, each convex vertex would have an angle bisector starting from it, which would be a part of the medial axis, and thereby taking the closure of the points on the medial axis makes these convex vertices the leaves of the medial axis.
\end{proofsketch}
}
Note that reflex vertices (i.e., vertices with internal angles greater than 180 degrees) cannot be nodes in the medial axis tree. An edge in $M$ is a non-empty maximal set of points that are equidistant between the same set of objects. When the two objects are of the same type (either both polygonal edges or both vertices), the corresponding medial axis edge will be a straight line segment. On the other hand, if one of the objects is a vertex and the other is a polygonal edge, the corresponding medial axis edge will be a parabolic arc. The endpoints of medial axis edges are the {\it medial axis nodes} or just nodes. Since the number of leaves is at most $n$, we get: 
\begin{claim}
The number of nodes in the medial axis of $P$ will be $O(n)$. 
\end{claim}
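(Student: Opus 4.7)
The plan is to leverage the previous claim that the medial axis $M$ of $P$ is a tree whose leaves are convex vertices of $P$. Since $P$ has at most $n$ convex vertices (because it has $n$ vertices in total), the number $L$ of leaves of $M$ satisfies $L \le n$. The remaining task is to bound the number of internal (non-leaf) nodes of $M$ in terms of $L$.

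To that end, I would show that every internal node of $M$ has degree at least $3$ in the tree. By the definitions given in the excerpt, a medial axis node is an endpoint of one or more medial axis edges, and each edge is a maximal set of points equidistant from the same pair of objects of $P$. An internal node $v$ is incident to at least two edges, corresponding to distinct pairs of equidistant objects, so $v$ is equidistant from at least three distinct objects. Suppose the incident edges at $v$ correspond to pairs $\{a,b\}$ and $\{b,c\}$; then $v$ is also equidistant from $a$ and $c$, so a third medial-axis edge of points equidistant from $\{a,c\}$ must emanate from $v$. The general-position assumption on $P$ (no four co-circular vertices, etc.) rules out four or more objects being simultaneously equidistant from $v$, so the degree at an internal node is exactly $3$. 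In particular, degree-$2$ branch points cannot occur.

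Finally, I would invoke the standard handshake argument on $M$: if $I$ denotes the number of internal nodes and $L$ the number of leaves, then $\sum_{u} \deg(u) = 2(L + I - 1)$ since $M$ is a tree, while the degree bound above gives $\sum_u \deg(u) \ge L + 3I$. Combining, $L + 3I \le 2L + 2I - 2$, hence $I \le L - 2$. Therefore the total number of nodes of $M$ is at most $2L - 2 \le 2n - 2 = O(n)$, as claimed.

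The main obstacle is the degree-$3$ justification: making precise why internal nodes must have at least three incident edges rather than, say, merely two meeting in a ``crossing'' manner. The argument hinges on carefully unwinding the definitions of medial-axis node and edge together with general position; once this is cleanly stated, the rest is an elementary tree-counting inequality.
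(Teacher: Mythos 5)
Your overall plan (at most $n$ leaves, then bound the internal nodes and apply a handshake count) is the right one, and it is essentially what the paper leaves implicit. But the key step --- that every internal node of the medial axis has degree at least $3$ --- is false as stated, and your justification for it breaks down precisely at the problematic nodes. The medial axis of a polygon has degree-$2$ nodes: these occur where a parabolic arc (points equidistant from a vertex $u$ and an edge $e$) meets a straight segment (points equidistant from $u$ and a vertex $w$, where $w$ is an endpoint of $e$). By the paper's definition an edge of $M$ is a \emph{maximal} set of points equidistant from the \emph{same} set of objects, so the arc and the segment are distinct edges and their common point is a node; yet only two edges emanate from it. Your argument ``$v$ is equidistant from $a$ and $c$, so a third edge for $\{a,c\}$ must emanate from $v$'' fails here because the third pair is $\{w,e\}$ with $w$ an endpoint of $e$: the locus of points whose distance to $e$ equals their distance to $w$ is a two-dimensional region (everything whose nearest point of $e$ is $w$), not a one-dimensional bisector, so no third branch exists. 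The general-position hypothesis does not help, since only three objects are involved.

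The conclusion still holds, but you need an extra ingredient: separately bound the number of these degree-$2$ transition nodes. Each one is determined by an incidence between a polygon edge and one of its endpoints, and since the region of $P$ nearest to a fixed object is connected for a simple polygon, each edge contributes at most two such transitions, giving at most $2n$ degree-$2$ nodes. Your handshake argument then applies verbatim to the nodes of degree $\ge 3$ (there are at most $L-2 \le n-2$ of them), and summing the three classes of nodes gives $O(n)$. For comparison, the paper offers no argument at all beyond ``the number of leaves is at most $n$,'' so it silently elides exactly the point your proof gets wrong; with the degree-$2$ count added, your write-up would actually be more complete than the paper's.
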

We use $D^P$ (or just $D$ when clear from context) to refer to the unweighted diameter of the medial axis $M$. More precisely, $D^P$ is the maximum number of edges over all paths in the medial axis tree of $P$.
\onlyLong{
\begin{figure}[h]
	\centering
		\includegraphics[width=0.7\textwidth,page=13,clip=true,trim=40 115 20 140]{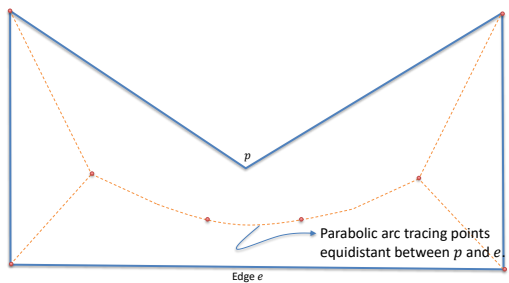}
	\caption{Medial axis of a polygon.}
	\label{fig:medial}
\end{figure}
}

\subsection{Computational Models.}

In this section, we focus on the connected art gallery problem in the classical sequential setting. In this case, we assume that the sequence of vertex points $(p_1, p_2, \ldots, p_n)$ are given in order, say, as an array of points. 

However, the connected art gallery problem is inspired by mobile agents that operate in a spatially distributed setting. So we employ a  distributed computing model based on the model used by Obermeyer {\it et al.}~\cite{OGB11}. For clarity, we assume a synchronous model with time discretized into a sequence of rounds and the agents execute a look-communicate-move cycle in each synchronous round; local computation is allowed at any time interspersed between the look-communicate-move cycles. %

We assume for the sake of convenience that there are $n$ agents (although our algorithm can operate with fewer). Agents (modeling mobile robots)  can be represented as points in the plane and as a result multiple agents can be co-located at the same point. Without loss of generality, assume all agents start at the same vertex somewhere in $P$.  Furthermore, agents can only move from one vertex $p_i$ to another vertex $p_j$ provided $\overline{p_i p_j}$ is a diagonal in $P$ (i.e., $p_i$ and $p_j$ have direct line of sight to each other). We assume that agents have unique IDs from $\{1, 2, \ldots, n\}$.  Each agent $g$ performs the following tasks within each round.

\noindent {\bf Look.} The agent $g$ first orients itself to start from a particular direction  (in the direction of another vertex called its {\it orientation vertex}) and perform a 360 degrees clockwise  sweep during which it creates a view of $\bar{V}_g^P$. The level of information  that the agent can gather depends on whether the agents have depth perception or not. With depth perception, the view is simply the full visibility polygon $\bar{V}_g^P$.
 Without depth perception, however, the view is limited to a sequence of alternating vertices and edges (possibly gap edges)  starting from its orientation vertex. In both cases, $g$ can also see other agents that are inside $V_g$.

\noindent {\bf Communicate.} Two agents can communicate as long as they are visible to each other (which of course includes co-located agents). Communication is via message passing. Each agent can send at most one message to each  agent that it can see.

\noindent {\bf Move.} This step again differs based on whether agents can perceive depth or not. Let us first consider the case when agents can perceive depth. Based on the outcome of the communication and computation, each agent $g$ chooses to move from its current location to a new location within its current visibility polygon. We assume that the agent -- once it reaches its destination location -- can ``remember'' its source position in the sense that it can spot the source location in its  view after it reaches its destination. The only restriction when agents cannot perceive depth is that they are limited to moving to vertices of the polygon. For this reason, we always assume that agents will be on polygonal vertices (even at the start of time) when they cannot perceive depth.

\section{Centralized Sequential and Parallel Algorithms} \label{sec:central}

We first present a centralized sequential algorithm and then briefly show how it can be parallelized. Our approach is to decompose the weak dual graph into a suitable set of at most $\lfloor n/2 \rfloor -1$ connected triplets and then assign a guard for every triplet. The high-level steps are outlined in Algorithm \ref{alg:central}. 
\begin{algorithm} %
\caption{Centralized algorithm for the connected art gallery problem.} %
\label{alg:central} %
\begin{algorithmic}[1]
    \STATE Compute a triangulation $T$ of the polygon $P$~\cite{C91} and then compute the weak dual graph.
    \STATE Decompose the weak dual graph into at most $\lfloor n/2 \rfloor -1$ triplets, i.e., groups of three connected nodes in $\T$, as described in Algorithm~\ref{alg:triplets}. \label{lno:decompose}
    \STATE Each triplet corresponds to three triangles arranged in such a way that there is a middle triangle that shares two edges, say $a$ and $b$, with the other two triangles. Placing a guard at the common vertex between $a$ and $b$ for every triplet is the required solution. \onlyLong{(See Figure~\ref{fig:triplets} for an illustration.)}
\end{algorithmic}
\end{algorithm}
\onlyLong{
\begin{figure}[h]
	\centering
		\includegraphics[width=0.5\textwidth,page=4,clip=true,trim=40 160 300 120]{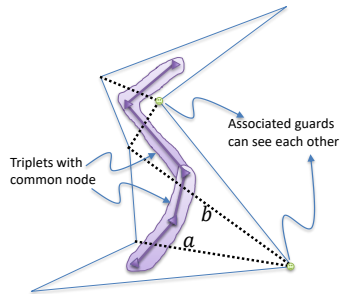}
	\caption{Illustrated placement of guards associated with triplets. Notice that the guards  are associated with the two triplets sharing a common node. Consequently, they can see each other.}
	\label{fig:triplets}
\end{figure}
}
Consider the weak dual graph $\T$, which is of course a tree with maximum degree 3. Root the tree at some node $r$ that is of degree 1.  A {\it triplet} is any set of three nodes in the tree that are connected. We now show a simple procedure (cf. Algorithm \ref{alg:triplets}) to decompose $\T$ into triplets. Subsequently, we will prove some properties of these triplets that will immediately  lead us to the required centralized algorithm for the connected art gallery problem.

\begin{lemma}
When two triplets share at least one common node, their associated guards can see each other.
\end{lemma}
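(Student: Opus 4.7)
The plan is to observe that each triplet $(T_L, T_M, T_R)$ has a canonical ``corner'' vertex of its middle triangle $T_M$, and that this corner vertex is actually a vertex of all three triangles of the triplet. Then, if two triplets share a triangle $C$, both guards lie at vertices of $C$, and since $C$ is convex (being a triangle of the triangulation and hence contained in $P$), the two guards see each other inside $P$.

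In detail, first I would establish the structural observation. A triplet is a set of three pairwise-connected nodes in the tree $\T$; being a subtree on three vertices, it is necessarily a path, with a unique middle triangle $T_M$ adjacent to the two endpoint triangles $T_L$ and $T_R$. Let $a$ be the diagonal edge shared by $T_M$ and $T_L$, and $b$ the diagonal edge shared by $T_M$ and $T_R$. Since $a$ and $b$ are both sides of the triangle $T_M$, they meet at a common vertex $v$ of $T_M$, which is where the associated guard is placed. The key point is that $v$, as an endpoint of $a$, is also a vertex of $T_L$, and as an endpoint of $b$, is also a vertex of $T_R$. Hence $v$ is a vertex of every triangle in its triplet.

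Next, suppose two triplets $\tau_1$ and $\tau_2$ share at least one triangle $C$. By the observation above, the guard $v_1$ of $\tau_1$ is a vertex of $C$, and likewise the guard $v_2$ of $\tau_2$ is a vertex of $C$ (this holds uniformly, regardless of whether $C$ plays the role of middle or endpoint in each triplet, and regardless of whether $v_1 = v_2$). Since $C$ is a triangle from the triangulation of $P$, it is a convex region contained in $P$, so the segment $\overline{v_1 v_2}$ lies entirely in $C \subseteq P$. Therefore $v_1$ and $v_2$ are in line of sight of each other, which is exactly the claim.

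I do not anticipate any serious obstacle: the argument reduces, once the ``guard is a corner of all three triangles in its triplet'' observation is in hand, to the triviality that any two vertices of a triangle see each other. The only mild subtlety is to be careful with the degenerate case where the two triplets share more than one triangle or where $v_1 = v_2$, but neither of these disrupts the argument, since the convexity of $C$ handles both uniformly.
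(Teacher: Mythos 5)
Your proof is correct and follows essentially the same route as the paper's: both arguments rest on the observation that each triplet's guard is a vertex of all three of its triangles, so two triplets sharing a triangle place their guards at vertices of that common (convex) triangle and hence in mutual line of sight. Your write-up merely spells out the justification of that observation in more detail than the paper does.
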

\onlyLong{
\begin{proof}
Notice first that the guard in some triplet $(t_1, t_2, t_3)$ is placed in such a manner that for every $t_i$, $1 \le i \le 3$, the guard is one of the vertices of $t_i$. Thus, for any two triplets that share at least one common triangle $t$, the two guards associated with the two triplets must be on vertices of $t$. Thus they can see each other.  (See Figure~\ref{fig:triplets} for an illustration.)
\end{proof}
}

\begin{figure}[h]
	\centering
		\includegraphics[width=0.65\textwidth,page=3,clip=true,trim=0 210 0 30]{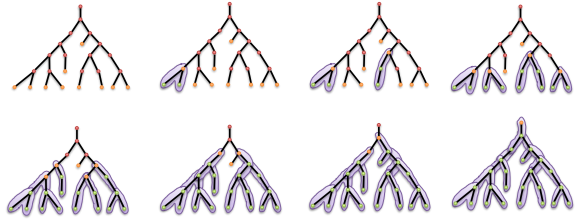}
	\caption{Sequence of states (mostly) at the end of each iteration of the {\tt for} loop in Algorithm~\ref{alg:triplets}.}
	\label{fig:decompose}
\end{figure}
\begin{algorithm} %
\caption{Algorithm to decompose $\T$ into triplets.} %
\label{alg:triplets} %
\begin{algorithmic}[1] %
    \REQUIRE A tree $\T$ rooted at a node $r$ of degree 1, with max degree three,  and depth $L$. Note that $r$ is assumed to be at level 0, so there are $L$ levels from $r$ to the farthest leaf (inclusive).
    \ENSURE A collection of triplets. 
\STATE Color all internal nodes red and all leaves orange.
\FOR{$\ell \leftarrow L$ down to $2$ (decrementing by 1 every iteration)}
	\WHILE{$\exists$ orange node $v$ at level $\ell$}
		\IF{$v$ has a sibling $v'$ that is also colored orange}
			\STATE Form a new triplet comprising $v$, $v'$, and their common parent  node $p$.
			\STATE Color parent $p$ orange.
			\STATE Color $v$ and $v'$ green.
		\ELSE
			\STATE Form a triplet comprising $v$, parent $p$ of $v$, and the grandparent $p'$ of $v$.
			\STATE Color $p'$ orange.
			\STATE Color $v$ and $p$ green.
		\ENDIF
	\ENDWHILE
\ENDFOR
\IF{the root is not part of some triplet}
\STATE Form a triplet comprising the root, its child, and an arbitrarily chosen grandchild. Color all three nodes green. \label{lno:last}
\ENDIF
\end{algorithmic}
\end{algorithm}

\onlyLong{
\begin{figure}[h]
	\centering
		\includegraphics[width=0.4\textwidth,page=8,clip=true,trim=200 160 270 120]{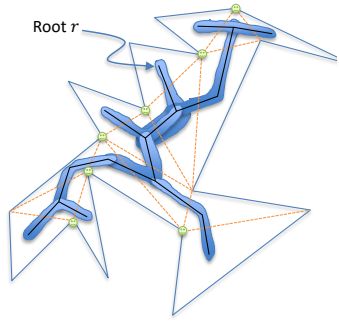}
	\caption{Illustrates the outcome of Algorithm~\ref{alg:central} along with the triangulation and the triplets.}
	\label{fig:connected}
\end{figure}
}

Having presented the algorithm to solve the connected art gallery problem in this centralized setting, we move on to analyze the algorithm. Our main focus will be on analyzing Algorithm~\ref{alg:triplets}. We make a series of observations formalized as lemmas and then derive the result as a consequence. For a given set of triplets, we define the {\it triplets graph} to be the graph with the triplets as vertices and edges between pairs of triplets that share at least one edge. We say that the set of triplets covers the tree $\T$ if every node is part of at least one triplet. \onlyShort{The proof of the following lemma is deferred to the full version.}

\begin{lemma}
Given $\mathsf{t}$ is the number of nodes in $\T$, we claim that Algorithm~\ref{alg:triplets}
\begin{enumerate}
	\item forms a set of triplets that covers $\T$,
	\item guarantees that the triplets graph	is connected,
	\item guarantees that the number of triplets formed is at most $\lfloor \mathsf{t}/2 \rfloor$, and
	\item runs in $O(\mathsf{t})$ time. %
\end{enumerate}
\end{lemma}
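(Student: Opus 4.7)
The plan is to base all four assertions on a single bottom-up structural analysis of how the colour of each node evolves through Algorithm~\ref{alg:triplets}. The main inductive statement I would prove, by induction on $\ell = L, L-1, \ldots, 2$, is: \emph{immediately after the iteration of the outer for-loop at level $\ell$, every node at depth at least $\ell$ is green and belongs to at least one triplet produced so far}. The base case holds since every node at the deepest level is a leaf, starts orange, and is picked as the $v$ of some Case 1 or Case 2 step. For the inductive step, an internal node $u$ at level $\ell$ has all children $c_i$ at level $\ell+1$ green by the IH, and the event that greened $c_i$ must be (a) a Case 1 at level $\ell+1$ with $p=u$, (b) a Case 2 at level $\ell+1$ with $p=u$, or (c) a Case 2 at level $\ell+2$ with $p'=u$ -- each of which places $u$ into its triplet (orange in (a) and (c), green in (b)). Hence $u$ is non-red and in a triplet by the start of level $\ell$; if still orange it is then picked and placed into another triplet. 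The invariant establishes \textbf{Claim 1} for every node at depth at least 2. For the remaining depths, when $\mathsf{t}\ge 3$ the unique root-child $c$ at depth 1 has a child at depth 2, so one further application of the same case analysis places $c$ in a triplet; and $r$ is placed either as the $p'$ of a Case 2 at level 2 or by the last-step triplet $(r,c,g)$. \textbf{Claim 4} follows because initialisation is $O(\mathsf{t})$, each of the $O(\mathsf{t})$ triplet formations takes $O(1)$ work, and the outer-loop scan contributes $O(\mathsf{t})$.

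For \textbf{Claim 3} I would use an amortised charging scheme: charge each Case 1 triplet to $\{v, v'\}$, each Case 2 triplet to $\{v, p\}$, and the last-step triplet (if produced) to $\{r, c\}$. I expect this to be the main obstacle of the proof, since I need to verify that the two nodes charged by each triplet were non-green at the moment of its creation. For Case 1 this is tautological. For Case 2 on $v$ at level $\ell$, the parent $p$ at level $\ell-1$ could only have been greened by a prior Case 2 at level $\ell$ applied to a sibling $v''$ of $v$, but such an application would have required $v$ to be non-orange at that earlier moment, contradicting the fact that $v$'s orange status was set by already-completed deeper-level processing. For the last-step triplet, $r$ must be red (any Case 2 at level 2 would have made $r$ part of a triplet and skipped the final \texttt{if}), and the same reasoning shows $c$ is not green. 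Since each node transitions from non-green to green at most once, the disjoint charges sum to at most $\mathsf{t}$, so the number of triplets is at most $\lfloor \mathsf{t}/2 \rfloor$.

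Finally, for \textbf{Claim 2} I associate with each main-loop triplet its \emph{exit} node -- the unique node it colours orange, which is $p$ for Case 1 and $p'$ for Case 2. If the exit lies at depth at least 2 it is subsequently picked as the $v$ of a later main-loop triplet, sharing that node with the original triplet and giving an edge in the triplets graph. Following these exit chains, every triplet reaches some ``top'' triplet whose exit is at depth 0 or 1. The key observation is that every top triplet contains the unique root-child $c$: a Case 1 at level 2 contains $c$ as its $p$, a Case 2 at level 2 contains $c$ as its $p$, a Case 2 at level 3 contains $c$ as its $p'$, and the last-step triplet contains $c$ explicitly. Hence all top triplets pairwise share $c$ and, combined with the exit-chain paths, the triplets graph is connected.
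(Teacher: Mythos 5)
Your proof is correct and, for parts (1), (3) and (4), essentially coincides with the paper's: your two-node charging scheme for part (3) is exactly the paper's ``half-dollar'' accounting (each triplet consumes the credit of its two lowest nodes and leaves the highest node's credit intact for a future triplet), and your colour-evolution induction for part (1) is a careful writing-out of what the paper dismisses as obvious from the pseudo-code. Where you genuinely diverge is part (2): the paper argues by contradiction, taking a triplet not connected to the root's triplet whose orange node is closest to the root and deriving a contradiction, whereas you build explicit paths in the triplets graph by following ``exit'' nodes upward to a family of top triplets that all contain the unique root-child $c$. Your version is more constructive and additionally exhibits $c$ as a hub that all top triplets share, at the price of a longer case analysis. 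One small imprecision to fix in that argument: an exit node at depth $\ell'\ge 2$ is not always ``subsequently picked as the $v$ of a later main-loop triplet'' --- it may instead be absorbed as the middle node $p$ of a Case~2 triplet formed during iteration $\ell'+1$ (for instance, when a lone orange sibling of one of its already-green children triggers Case~2 one level below). This does not break the argument: in that event the exit node is still shared with a later triplet whose own exit lies at depth $\ell'-1$, so the chain still ascends strictly and terminates at a top triplet containing $c$; you should simply state the claim as ``the orange exit node is later included in some triplet'' without committing to the role it plays, which is also how the paper phrases the corresponding step.
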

\onlyLong{
\begin{proof}
We address the statements in sequence.
\begin{enumerate}
	\item Coverage is obvious from the pseudo-code because all green nodes are covered by some triplet or the other and all nodes finish green.
\item To show connectedness, it suffices to show that every triplet is connected to the triplet $t^*$ that covers the root of $\T$. Suppose not. Let $t$ be the triplet that is not connected to $t^*$ such that its orange node $v$ (i.e., the node in $\T$ that was orange when $t$ was first formed) is closes to the root of $\T$. At some subsequent time, $v$ must be picked up form a triplet $t'$. Either $t'$ is not connected to $t^*$ (in which case $t$ is not the triplet with its orange node closes to the root) or $t'$ is connected to $t^*$ (in which case, so is $t$). Thus, either way, we reach a contradiction.

\item To bound the number of triplets, we use a counting argument for which we place half a dollar on each node of the tree. For each triplet other than the triplet added in line number~\ref{lno:last}, the algorithm must spend a dollar and it gets these from the lowest two nodes in the triplet and leaves the half dollar in the highest node intact (for a future triplet to consume). The invariant is that the green nodes have no money left, while the other nodes have their half dollar intact. 
Given the way we re-color nodes, it is clear that this invariant is maintained at least till the end of the {\tt for} loop. After the {\tt for} loop ends, the root is guaranteed to be a non-green color (therefore still has its half dollar), but the child of the root may be either orange or green. If it is green, then the triplet that first covered that node must have also covered the root and colored the root orange; so the triplets have covered the tree with a half dollar to spare. On the other hand, if the child of the root is orange, both it and the root have their half dollars intact and the last triplet (from line number~\ref{lno:last}) consumes them, thereby leaving us with a tree covered with triplets and no spare change left. From either case, we can conclude that the number of triplets is $\lfloor \mathsf{t}/2 \rfloor$.

\item The first line requires at most $O(\mathsf{t})$ time because triangulation has been shown to take at most $O(\mathsf{t})$ time~\cite{C91}. The last line also requires $O(\mathsf{t})$ time because there are $\mathsf{t}$ triangles in the triangulation. So we need to focus on the second line that deals with decomposing the tree into triplets. At each level, the time is at most proportional to the number of nodes in that level because for each node (that is colored orange) in that level, a short $O(1)$ time procedure is performed and then the node is never visited again.
\end{enumerate}
\end{proof}
}
Recalling that $\mathsf{t} = n-2$, we can conclude that the sequential algorithm runs in $O(n)$ time.

Finally, we remark that the algorithm described above can be implemented in parallel. In particular, consider the shared memory CREW PRAM model comprising $O(n)$ processors. Goodrich~\cite{G89} has already shown how to triangulate $P$ in $O(\log n)$ time under PRAM. We logically assign one processor per triangle and ensure the parent-child relationship between triangles is extended to the processors. Then, each iteration of the {\tt for} loop in Algorithm~\ref{fig:decompose} (comprising several {\tt while} loop iterations) can be executed in parallel. 
\begin{theorem}
Supported by Algorithm~\ref{alg:triplets}, Algorithm~\ref{alg:central} solves the connected art gallery problem with at most $\lfloor n/2 \rfloor -1$ guards in time that is linear in $n$. Moreover, can be solved in the CREW PRAM model with at most $\lfloor n/2 \rfloor -1$ guards in time that is linear in the diameter $D$ of the weak dual graph associated with the triangulation of $P$. 
\end{theorem}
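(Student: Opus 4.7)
The plan is to derive the theorem in two steps, corresponding to the two parts of the statement: first the sequential correctness and running time, and then the PRAM simulation.

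For the sequential part, the preceding lemma already does most of the work: Algorithm~\ref{alg:triplets} produces a set of at most $\lfloor \mathsf{t}/2 \rfloor = \lfloor (n-2)/2 \rfloor = \lfloor n/2 \rfloor - 1$ triplets covering $\T$, whose triplets graph is connected, in $O(n)$ time. I would place one guard per triplet at the vertex described in line~3 of Algorithm~\ref{alg:central}. To argue that this yields a valid visibly connected guarding, I would check two things. (i) \emph{Coverage}: the chosen placement lies at a vertex of each of the three triangles of the triplet (it is the common vertex of the two shared edges of the middle triangle, and hence a vertex of all three triangles), and since a triangle is convex, a guard sitting on any of its vertices sees the whole triangle. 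Thus the guard of a triplet sees the union of its three triangles, and since the triplets cover $\T$, the guards collectively see every triangle of $T$, and therefore all of $P$. (ii) \emph{Connectivity}: by the earlier lemma, any two triplets sharing at least one triangle have mutually visible guards; combined with the connectedness of the triplets graph, this implies that the visibility graph of the placed guards is connected. The total sequential running time is the $O(n)$ for Chazelle's triangulation~\cite{C91}, plus the $O(n)$ for Algorithm~\ref{alg:triplets}, plus an $O(n)$ pass through the triplets to emit the guard positions.

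For the PRAM statement, I would replay the same construction but parallelize each piece. Triangulation can be done in $O(\log n)$ time on CREW PRAM by Goodrich~\cite{G89}. Building the weak dual graph, rooting it at an arbitrary degree-$1$ node $r$, and computing for each triangle its level and its (grand)parent pointers are standard tree operations that take $O(\log n)$ time using Euler tour / tree contraction with $O(n)$ processors. The heart of the argument is the parallel implementation of the \textbf{for} loop of Algorithm~\ref{alg:triplets}: for each level $\ell$ processed from $L$ down to $2$, I would process all orange nodes at level $\ell$ simultaneously in $O(1)$ time. The only conflict is when two orange siblings both exist at level $\ell$ and both would ``claim'' the parent; I resolve this by having each orange node inspect its sibling's color in $O(1)$, and when both are orange use the smaller triangle ID as the tiebreaker (pairing yields one triplet $(v,v',p)$ with parent $p$ recolored orange). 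Unpaired orange nodes form the three-generation triplet $(v,p,p')$ independently. Since nodes at distinct levels do not interfere, one iteration of the \textbf{for} loop costs $O(1)$, so the total cost of the decomposition is $O(L)$.

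Finally, to express this in terms of $D$, observe that rooting the weak dual tree at a degree-$1$ node $r$ gives depth $L$ equal to the eccentricity of $r$, which is bounded by the diameter $D$ of $\T$. Placing the guards from the triplets in parallel is trivially $O(1)$ with $O(n)$ processors, so the whole PRAM running time is $O(\log n + D) = O(D)$ (since $D = \Omega(\log n)$ for a bounded-degree tree on $\Theta(n)$ nodes, or, if one prefers, one simply writes $O(D)$ absorbing the $\log n$). The main obstacle I anticipate is ensuring that the sibling-pairing step is genuinely conflict-free under CREW semantics; using only reads of sibling color and writes by the parent of each newly formed triplet (in disjoint memory cells indexed by parent identity) is enough, so no concurrent writes occur, and correctness of the resulting decomposition is identical to the sequential case because the sequential \textbf{while} loop is order-oblivious among orange nodes at the same level.
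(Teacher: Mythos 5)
Your proof is correct and follows essentially the same route as the paper: the sequential part is exactly the combination of the triplet-decomposition lemma with the guard-placement rule of Algorithm~\ref{alg:central}, and the PRAM part is the paper's remark combining Goodrich's parallel triangulation with level-by-level parallel execution of the \texttt{for} loop. You in fact supply details the paper leaves implicit (the convexity-based coverage argument, the CREW sibling-pairing tiebreak, and the observation that $D=\Omega(\log n)$ for a bounded-degree tree absorbs the $O(\log n)$ triangulation cost), all of which are sound.
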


\section{Distributed Guarding With Proximity Perception} \label{sec:algo1}

In this section, we consider the case where each agent is able to distinguish the proximity or relative distances (without knowing the actual distances) between the various objects associated with the polygon as well as with other agents, etc. which are in its visibility polygon at any specified moment. Specifically, the agents' {\tt "look"} paradigm is reflective of real-world sensing techniques where absolute distances to objects in the scene are unavailable whilst their relative distances can be inferred such as with photogramametric vision in drones. 

We give a distributed solution that solves the  connected art-gallery problem and runs in $O(\min(\tilde{d}_v^2, n))$ rounds, where $\tilde{d}_v$ is the minimal v-diameter that we formally define later in this section.  Our solution comprises two algorithms that are executed in parallel, one in a breadth first manner and the other in a depth first manner.  Our final solution is to take the best out of both explorations. \onlyLong{We will begin with the breadth first algorithm that runs in $O(\tilde{d}_v^2)$ rounds.} \onlyShort{Due to space limitation, we will describe the breadth first algorithm that runs in $O(\tilde{d}_v^2)$ rounds in detail. We subsequently present a brief overview of the depth first exploration and defer further details to the full version~\cite{fullversion}.\john{To be done.}} Finally, we conclude with some remarks on how our algorithm can form the basis for solving other polygon problems on $P$. 

Assuming that the vertices and the edges defining the polygon $P$ are in general position, the agents start at some vertex in $P$, which we can assume w.l.o.g. to be $p_1$.
The algorithm operates in phases.  %
At the end of a particular phase $\ell$, %
a subset $S_\ell$ of the agents have ``settle'' into their final positions while establishing a connected guarding of the subset $P_\ell$ of the polygon. For any agent $i$, its settled position is a vertex in $P$ and is denoted $s_i$. 

 Moreover, the settled agents are arranged in the following hierarchical manner. %
 W.l.o.g., let the root be agent 1, settled at $p_1$. We define the territory of the root, i.e.,  agent 1, to be $\territory(1) \triangleq \bar{V}_{p_1}^P$.   Every other settled agent  $j$  has a parent agent $\parent(j)$. If agent $i = \parent(j)$, then we say that $j$ is the child of $i$ denoted as $\child(i)$.  Each parent agent $i$ has one child agent $j$ per gap edge in its $\territory(i)$ and the child is located at one of the end points, say $p_a$, of a the gap  edge $(p_a, p_b)$. Thus $s_j = p_a$. The other end of the gap edge $p_b$ is denoted $\orient(j)$; intuitively, $j$ settles at $s_j$ and orients itself towards  $\orient(j)$ for performing ``look'' operations. 
 
 \noindent Further, $\territory(j) \triangleq \bar{V}_{s_j}^P \cap \crop(P,s_{\parent(j)}, s_j, \orient(j))$ i.e., it is the portion of $s_j$'s vertex limited visibility polygon not containing $s_j$'s parent and truncated by the gap edge that originated it. (See Figure \ref{fig:suboptimal}.) Intuitively, each agent $j$ is only responsible for guarding its $\territory(j)$. Notice that by definition, territories of a parent and its child do not overlap (they share a bordering edge that is a gap edge seen originally by the parent). Moreover, territories of children of a given parent do not overlap as well (at most they share a vertex) as this would imply that the polygon contains holes i.e., it is non-simple.
 Therefore, to ensure correctness, we must ensure that $\cup_j \territory(j) = P$.

Next, we define a {\it territory tree} to be a tree in which nodes are territories and edges are pairs of territories that share a common diagonal (gap) edge. Let $T^*$ be the set of all possible territory trees that can be achieved given all possible options for the starting vertex $p_1$ and all possible choice of placement of child agents. Then, we define $d$ as the maximum diameter of all such territory trees, i.e., $d \triangleq \max_{T \in T^*} \diam(T)$.

Initially, $S_0 = \{1\}$ (w.l.o.g.), $s_1 = p_1$, and $P_0$ is simply $\bar{V}_{s_1}^P$. We are now ready to present the steps to be performed within each phase $\ell$; notice that there cannot be more than $d$ phases to the algorithm hence, $1 \le \ell < d$. Intuitively, in each phase $\ell$ (see Algorithm~\ref{alg:suboptimal}), we incrementally construct the territories at level $\ell$ of the territory tree.
\begin{algorithm}
\caption{Phase $\ell \ge 1$ of the distributed algorithm for the %
connected art gallery problem that may use more than $\lfloor n/2\rfloor -1$ guards. This description assumes phases 0 to $\ell-1$ have completed and each agent $i \in S_{\ell-1}, \ell > 1$, remembers one marked vertex. (This marking scheme ensures that a child and grandparent are not visible to one another.)} %
\label{alg:suboptimal}
\begin{algorithmic}[1]
    \STATE Every settled agent $i \in S_{\ell -1}$ performs a {\it look} operation into its $\territory(i)$ and counts the number of gap edges in its $\territory(i)$. Call this count $b_i$. Each settled agent $i$ now up-casts $b_i$ to the root with intermediate settled agents aggregating the quantities by adding up the numbers sent by their children. 
	\STATE Notice that at the end of the up-casting, the root will know the total number $b$ of gap edges. The root apportions $b$ new agents and sends them to its children according to the numbers sent by each child. Subsequently, whenever a settled agent notices new agents reaching its position, it will apportion the agents according to numbers sent by its children and the new agents will move to their assigned child of the current settled agent. 
	\STATE Each settled agent $i \in S_{\ell-1}$  whose $\territory(i)$ has  some $b_i > 0$ gap edges gets exactly $b_i$ new agents. Agent $i$ assigns each of those new agents $j$ to an unmarked vertex of each such gap edge and consequently, agent $j$ marks the other vertex of that gap edge.
\end{algorithmic}
\end{algorithm}
\begin{figure}[h]
	\centering
		\includegraphics[width=0.5\textwidth,page=15,clip=true,trim=80 50 20 80]{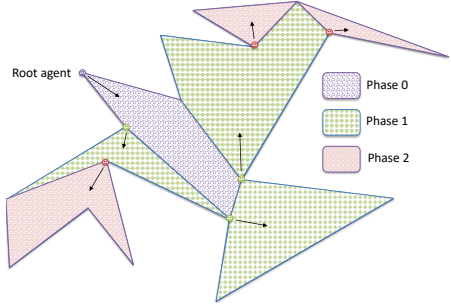}
	\caption{Depicts the placement of agents and their respective territories at three levels. The arrows point each agent into its territory.}
	\label{fig:suboptimal}
\end{figure}

\begin{lemma} \label{lem:nodepth}
Repeating Algorithm~\ref{alg:suboptimal} until all levels of the territory tree are explored, we get a distributed algorithm that, with no more than $n$ agents, ensures that the agents position themselves in a manner that solves the connected art gallery problem. The round complexity is $O(d^2)$.
\end{lemma}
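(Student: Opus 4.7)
The plan is to establish the three claims---correctness, the agent bound, and the round complexity---by induction on the phase index $\ell$. Let $P_\ell = \bigcup_{i \in S_\ell} \territory(i)$ denote the region covered after phase $\ell$. First I would prove the invariant that at the end of phase $\ell$, (i) the settled agents in $S_\ell$ form a connected subgraph of the visibility graph $\G_{S_\ell}^P$, and (ii) every point of $P \setminus P_\ell$ lies behind some gap edge of some $\territory(i)$ with $i \in S_\ell$. The base case $\ell = 0$ is immediate: $S_0 = \{1\}$ and $P \setminus \bar{V}_{p_1}^P$ is by definition the union of regions behind gap edges of the root's territory.

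For the inductive step, I would note that Phase $\ell$ assigns exactly one new agent $j$ to each gap edge of each frontier territory, placing $s_j$ at an endpoint of that gap edge so that $s_j$ is visible from $s_{\parent(j)}$; this preserves connectivity. By the definition $\territory(j) = \bar{V}_{s_j}^P \cap \crop(P, s_{\parent(j)}, s_j, \orient(j))$, the new territory captures exactly the part of $P$ on the far side of that gap edge that is visible from $s_j$, and any still-uncovered portion must lie behind a fresh gap edge of $\territory(j)$. Because $P$ is a simple polygon (no holes) and the marking scheme prevents $j$ from ever treating a vertex on its parent's side as a candidate, the exploration strictly advances along an acyclic territory tree, whose depth is bounded by $d$. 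When all levels have been processed, every point of $P$ lies in some $\territory(j)$, which together with the connectivity invariant solves the connected art gallery problem.

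To bound the number of settled agents by $n$, I would argue that the territories are pairwise interior-disjoint (shared boundaries only along gap edges, which follows from simplicity of $P$) and that each settled agent $j \ne 1$ is uniquely associated with the polygon vertex $\orient(j)$ that its parent marked when creating it; no vertex of $P$ can play this role for two distinct agents since it would have to be marked twice. Hence the number of settled agents is at most $1 + |\partial P| \le n$. For the round complexity, I would observe that each phase $\ell$ performs (a) an up-cast of gap-edge counts from the depth-$\ell$ frontier to the root, (b) a down-cast of the corresponding agent allocations, and (c) the physical routing of the new agents along the settled chain to their assigned parents. Each of these stages costs $O(\ell)$ rounds along the connected backbone of already-settled agents, so the total cost is $\sum_{\ell=1}^{d} O(\ell) = O(d^2)$ rounds.

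The hard part will be rigorously justifying that the marking scheme, together with the simplicity of $P$, makes the territory structure a genuine tree rather than a DAG---in particular, that two sibling branches cannot independently dispatch agents that end up with overlapping territories or at the same settled vertex. Once this structural property is in hand, the coverage induction, the agent count, and the round complexity all follow from the straightforward accounting above.
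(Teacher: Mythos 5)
Your overall approach matches the paper's: connectivity comes for free from parent--child visibility, coverage comes from the territories partitioning $P$, and the $O(d^2)$ bound comes from $O(d)$ up-cast/down-cast work per phase over at most $d$ phases. Your coverage induction (``every uncovered point lies behind a gap edge of a frontier territory'') is a clean repackaging of the paper's argument, which instead proves termination (each territory consumes a distinct gap edge, of which there are finitely many) and then derives a contradiction from an uncovered region: the diagonal separating it from the covered region would have been detected as a gap edge. Both routes rest on the same structural fact you rightly flag as the crux, namely that simplicity of $P$ forces the territories to be pairwise disjoint and the territory graph to be a tree (a ``back edge'' would create a hole); the paper asserts this with exactly that one-line justification.

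The one genuine gap is your count of settled agents. You inject each agent $j\ne 1$ into the vertex $\orient(j)$ and claim no vertex can be marked twice, but nothing in Algorithm~\ref{alg:suboptimal} forbids this: the ``unmarked'' constraint governs where an agent is \emph{placed}, not which vertex it marks, and a single reflex vertex of $P$ can be an endpoint of many non-crossing diagonals (think of a fan), hence the marked endpoint of gap edges arising in several different territories. Indeed two agents can even settle at the same vertex when two gap edges of one territory share an unmarked endpoint, so counting settled positions does not work either. The correct accounting goes through the gap edges themselves: each non-root territory is originated by a distinct gap edge, these gap edges are pairwise non-crossing diagonals of $P$ (by disjointness of the territories), and a simple $n$-gon admits at most $n-3$ such diagonals, so there are at most $n-2$ territories and hence at most $n-2\le n$ settled agents. (The paper's own proof of Lemma~\ref{lem:nodepth} is silent on the $n$ bound, so supplying this argument is a genuine improvement once fixed.)
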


\onlyLong{
\begin{proof}
To ensure correctness, we have to prove that (i) every point in $P$ is visible to at least one guard and (ii) the visibility graph of the guards is connected. Item (ii) is immediate from the fact that the new agents are always connected to their parents, so no isolated agents are ever created. Item (i) is also clear if we show that the territories form a partitioning of the polygon -- to this end, we have already seen previously that the set of territories generated by running Algorithm \ref{alg:suboptimal} are disjoint and each territory is guarded by at least one of the agents as ensured by the deployment step of agents in each phase. Now, to show that the algorithm terminates we observe that in each phase, every territory, apart from the root, is uniquely associated with a gap-edge (namely its originating gap-edge) and moreover, two different territories do not share this originating gap-edge. Since the total set of candidate gap-edges is finite ($< \binom{n}{2}$), it must be that the algorithm terminates eventually. Moreover, if the algorithm terminates such that $\cup_j \territory(j) \subsetneq P$, then $P - \cup_j \territory(j)$ and $\cup_j \territory(j)$ share an internal diagonal, say $\overline{p_i p_j}$, of the polygon which belongs to the territory, say $\territory(q)$ -- this means that during the round just after deployment of agent $q$, $\overline{p_i p_j}$ will be detected as a gap-edge, causing at least some portion of the un-guarded region to become guarded within the subsequent phase - a contradiction! Thus, when the algorithm terminates it must be that $\cup_j \territory(j) = P$.

The running time comes from the fact that each time Algorithm~\ref{alg:suboptimal} is invoked, it requires at most $O(d)$ time to complete because the main time consuming task is the up-casting of $b_i$ values and the down-casting of the required number of agents. Since the depth is at most $d$, Algorithm~\ref{alg:suboptimal} is invoked at most $d$ times and we get the required result.
\end{proof}
}

Here, we introduce the notion of {\it minimal visibility connected} {\it vertex guarding} (henceforth referred to as {\it minimal v-guarding}) which is pivotal in this case for developing algorithmic bounds on the running time. 
Let $P$ be a simple polygon with \(n\) vertices. Recall that,
given a set of labelled guards \(G=\{g_1, g_2, \ldots, g_k\}\)
of $P$, we associate with $G$ a unique graph \G\ with a vertex set of size $k$  \onlyLong{which abstracts the visibility relations between the guards w.r.t.
the polygon $P$:} such that when two guards are visible to each other, then they are connected by an edge in this graph \G\, i.e.,
\(e=\{i,j\} \in E[\G] \iff g_i\) is visible to \(g_j\). We say that \(G\) is a {\it minimal v-guarding}
or a {\it minimal v-configuration} of $P$ whenever the
following holds, %
\(\forall v \in \G,\) \textbf{{\it at least}} one of the following two
conditions applies:
\begin{enumerate}
\item
  \(\G - v\) has more than a single component.
\item
  the vertex guarding \(G_{-v}:= G \setminus \{g_v\}\) of
  $P$ is incomplete, i.e.,
  \(\cup_{g \in G_{-v}} \partial V_g \subsetneq \partial P\) where $\partial P$ is the set of vertices of polygon $P$ and  $\partial V_g$ refers to the set of vertices of $P$ visible from $g$\onlyLong{ (set of nodes in the visibility graph $V_g^P$)}.
\end{enumerate}

We now define the notion of {\it minimal v-diameter}
\(\tilde{d}_v := \max\limits_{\tau \in \mathcal{M}} {\sf diameter}(G_\tau)\)
where \(\mathcal{M}\) is the set of all possible minimal v-configurations of $P$
and \(G_\tau\) denotes the associated visibility graph of such a
guarding \(\tau\) from $\mathcal{M}$.

In order that we may successfully compare the efficiency of our algorithms with one another we use $\tilde{d}_v$ which is a polygonal parameter pertinent to our guarding problem.

\begin{lemma} \label{lem:twodiams}
$d = O(\tilde{d}_v)$ i.e., the diameter of any territory tree on $P$ is asymptotically bounded by the minimal v-diameter of the Polygon $P$.
\end{lemma}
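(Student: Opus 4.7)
The plan is to show $\tilde{d}_v = \Omega(d)$ by constructing, for any territory tree $T$ of $P$ with diameter $d$, a minimal v-configuration $G' \in \mathcal{M}$ whose visibility graph has diameter $\Omega(d)$. The starting point is that the agent set $G_T$ of $T$ is itself a connected vertex guarding of $P$: coverage follows because territories partition $P$ and each agent is visible throughout its own territory (so every polygon vertex, lying in some territory, is guarded), and connectivity follows because each non-root agent sits on a vertex of the gap edge bordering its parent's territory and is therefore visible to its parent. I would then greedily remove any agent whose deletion leaves a connected vertex guarding, obtaining a minimal v-configuration $G'$.

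The central geometric tool is the following tree-walk observation: because territories partition $P$ and their shared internal boundaries are precisely the gap edges associated with edges of $T$, any line segment $\overline{ab} \subseteq P$ crosses a sequence of gap edges that induces a walk in $T$ from $\territory(a)$ to $\territory(b)$ whose length is at least $\text{dist}_T(\territory(a), \territory(b))$. Let $\pi = (t_0, \ldots, t_d)$ be a longest path in $T$ with associated agents $g_0, \ldots, g_d$, and suppose we have a constant $c$ such that any visibility-graph edge in $G'$, and more generally any sight line from a retained agent to a polygon vertex, spans tree distance at most $c$ in $T$. Since $G'$ is a vertex guarding, there exist $b_0, b_d \in G'$ that see $g_0$ and $g_d$ respectively, giving $\text{dist}_T(\territory(b_0), t_0) \leq c$ and $\text{dist}_T(\territory(b_d), t_d) \leq c$, and hence $\text{dist}_T(\territory(b_0), \territory(b_d)) \geq d - 2c$. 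Applying the tree-walk observation edge-by-edge to any shortest visibility-graph path from $b_0$ to $b_d$ in $G'$ then forces the length of that path to be at least $(d - 2c)/c = \Omega(d)$, so $\diam(G') = \Omega(d)$.

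The principal obstacle is establishing the constant bound $c$. The intended argument invokes the minimality of $G'$: if $u, v \in G'$ are directly visible yet $\text{dist}_T(\territory(u), \territory(v))$ is large, the segment $\overline{uv}$ slices through many intermediate territories, and any retained agent in those territories must be pinned there by a polygon vertex that only it can cover, for otherwise it would have been removable during minimization (the direct edge $(u, v)$ already supplies any connectivity role an intermediate agent could have played). A careful geometric case analysis of how many vertices along such a line of sight can genuinely demand a local retained agent is then expected to yield $c = O(1)$. The most delicate sub-case is that of "transparent" intermediate territories whose vertices all happen to be visible from $u$ or $v$; handling this is likely to require amortization over runs of consecutive such territories, rather than a clean per-edge bound.
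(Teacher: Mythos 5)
Your overall plan is sound in outline and your ``tree-walk'' observation is correct (the internal boundaries of territories are exactly the parent--child gap edges, so a segment in $P$ does induce a walk in the territory tree), but the proof has a genuine unfilled gap at its crux, and the route you chose makes that gap hard to close. The paper goes the other way: instead of starting from the full agent set and \emph{removing} guards, it restricts attention to the agents $L_0$ on a longest path of the territory tree, argues (via the marking scheme, which prevents a child from landing on $\orient(j)$ and thereby seeing its grandparent) that the visibility graph of $L_0$ alone is minimally connected with diameter $\Theta(d)$, and then \emph{extends} $L_0$ to a minimal v-configuration $G_L$ by adding one guard at a time, where each added guard is placed in a region visible to exactly one existing guard. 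Because every added guard is a pendant vertex of the visibility graph, the extension provably cannot create shortcuts, so $\diam(G_L) \ge \diam(L_0) = \Omega(d)$. This sidesteps entirely the need to control sight lines between far-apart territories.

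The gap in your version is the constant $c$. You need that \emph{every} visibility edge between retained agents of $G'$ spans $O(1)$ distance in $T$; a single long chord (two gap-edge endpoints far apart in $T$ that happen to see each other) destroys the bound $(d-2c)/c$, since the shortest $b_0$--$b_d$ path may use that chord. The minimality argument you invoke does not address this: it reasons about which \emph{intermediate} agents survive removal, not about the geometric span of the segment $\overline{uv}$ itself, and minimality places no restriction on $u$ and $v$ being mutually visible across many territories (near-convex staircase-like polygons give exactly this behaviour). There is a second, structural problem: $\tilde{d}_v$ is a \emph{maximum} over minimal v-configurations, so you only need to exhibit one configuration of large diameter, but greedy removal gives you no control over which minimal configuration you land in --- it may well delete the path agents and retain a few long-sighted guards, yielding a small-diameter $G'$ even when a large-diameter minimal configuration exists. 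You yourself flag that establishing $c=O(1)$ is ``expected'' to follow from a ``delicate'' amortization; as written this is a conjecture, not a proof, and it is the step that carries the entire argument. To repair the proof along the paper's lines, build your candidate configuration bottom-up from the longest territory-tree path and only ever attach guards that are visible to a unique existing guard, rather than trying to bound the reach of arbitrary sight lines after the fact.
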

\onlyLong{
\begin{proof}
Our strategy to prove lemma \ref{lem:twodiams} requires us to show that any configuration of the agents generated by our algorithm, henceforth called a solution-configuration (i.e., as arrived at by running our algorithm) can be put in correspondence with a minimal v-configuration such that the diameter of said minimal v-configuration is no less than that of the solution-configuration in consideration. Without loss of generality, assume that the guards and vertices of the polygon are in general position. The strategy is as follows:

\begin{itemize}
   \item Consider the territory tree {\tt tt} associated with the solution-configuration, isolate the set of agents on the longest path of {\tt tt} from root to any leaf: call it $L_0$.
    \item Observe that the visibility graph of $L_0$ is minimally connected since no three nodes in $L_0$ forms a triangle in the visibility graph. \footnote{The only possibility is that \{j, \parent(j), \child(j)\} forms a triangle: since the only way a \child(j) could see \parent(j) is if the bot $j$ during its turn in run of Algorithm \ref{alg:suboptimal} locates a gap-edge that is once again incident on \orient(j) and places the bot on this vertex leading to \child(j) = \orient(j) - however, recall that the marking scheme precisely avoids this placement by forcing $j$ to choose the other vertex of the gap-edge for deploying the bot, thereby avoiding this visibility triangle formation.} Moreover,\\ {\sf diameter}($L_0$) = $\Theta$({\sf diameter}({\tt tt})).
    \item Transform $L_0 \xrightarrow[\text{extend}]{\text{minimal}} G_L$ i.e., starting from the partial solution-configuration $L_0$, we obtain $G_L$ which is a minimal v-configuration (Construction described in algorithm \ref{alg:minxtnd}).
    \item Applying Lemma \ref{lem:minxtnd} inductively starting from $L_0$, we have that {\sf diameter}($G_L$) $\geq$ {\sf diameter}($L_0$).
\end{itemize}
\end{proof}

\begin{algorithm}
\caption{Minimal-extension of $L_0$ to $G_L$. %
W.l.o.g. we assume that all the guards, vertices and edges are in general position. Note: A {\sf v-edge} of $P_j$ is an edge with at least one end point being a vertex of $P$ and {\tt cl(R)} denotes the set-closure of region R under the usual $\ell_2$-norm.}
\label{alg:minxtnd}
\begin{algorithmic}[1]
    \STATE Let $L_0 = \{g_1, g_2, \ldots, g_k\}$ and set $j$ to 1.
    \STATE set $P_j = {\tt cl}(P \setminus \bigcup\limits_{h \in L_{j-1}}V_g)$ and $\partial P_j = \{v \in P_j : v\ is\ a\ vertex\ of\ P \}$ 
    \STATE If $\partial P_j \neq \phi$ then, {\tt choose} $e_j$, a {\sf v-edge} of $P_j$: we know that $e_j$ is part of the boundary of {\it exactly one} visibility polygon (refer Lemma \ref{lem:minxtnd}), say of $g \in L_{j-1}$; {\tt else} we terminate the algorithm and output $G_L = L_{j-1}$.
    \STATE Let $L'_{j-1} = L_{j-1} \setminus \{g\}$. Set $P'_j =  P_j \setminus {\tt cl}\big(P \setminus \bigcup\limits_{h \in L'_{j-1}}V_g\big)$
    \STATE Pick a point $p'_j \in P'_j$, in general position, such that $L_j = L_{j-1} \cup \{p'_j\}$ is a (partial) minimal v-configuration.
    \STATE Increment $j$ and repeat from step 2.
    
\end{algorithmic}
\end{algorithm}

Note that the algorithm \ref{alg:minxtnd} is used just as an analysis and the actual running time is irrelevant. 

\begin{lemma} \label{lem:minxtnd}
In Algorithm \ref{alg:minxtnd}, whenever $L_{j-1}$ is a partial minimal v-configuration of polygon $P$ such that, in step 3, $\partial P_j \neq \emptyset$, then there exists a {\sf v-edge} $e_j$ of $P_j$ such that there is a unique $g \in L_{j-1}$ such that $e_j \in V_{g}$. Moreover, there always exists at least one point $p'_j \in P'_j$ (in general position w.r.t. $L_{j-1} \cup \partial P$) such that
\begin{enumerate}
    \item  $L_j \triangleq L_{j-1} \cup \{p'_j\}$ is a (partial) minimal v-configuration of $P$ and
    \item The set of guards $L_j$ strictly guard more vertices of $P$ than $L_{j-1}$ i.e.,
    \[\bigcup\limits_{h \in L_{j-1}}\partial V_h \subsetneq \bigcup\limits_{h \in L_{j}}\partial V_h\]
\end{enumerate}
\end{lemma}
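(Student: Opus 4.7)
The plan is to prove the two assertions of the lemma in sequence, using the general position hypothesis and the openness of the admissible regions to pick $p'_j$ correctly.

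For the first assertion, I would fix an unguarded polygon vertex $v \in \partial P_j$ and examine $\partial P_j$ locally around $v$. The boundary there is a concatenation of straight segments, each either a sub-segment of a polygon edge of $P$ incident to $v$ or a ``cut'' segment inherited from the visibility boundary of some $h \in L_{j-1}$. Since $v$ is unguarded but $P_j$ is obtained by excising the visibility polygons from $P$, at least one boundary segment adjacent to $v$ must be such a cut segment; call it $e_j$. Since $v$ is a polygon vertex serving as an endpoint of $e_j$, it is a v-edge. Uniqueness of the corresponding $g \in L_{j-1}$ is then immediate from the general position assumption: cut segments from two distinct guards cannot coincide along a positive-length piece (no two sight lines through reflex vertices align generically).

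For the second assertion, I would place $p'_j$ inside $P'_j$ close to $e_j$ and near its polygon-vertex endpoint $v$. Because $P'_j$ is precisely the region visible from $g$ and no other $h \in L_{j-1}$, $p'_j$ acquires exactly one new edge $\{p'_j, g\}$ in the visibility graph of $L_j$, appearing as a pendant in $G_{L_j}$. By approaching $v$ along $V_g$, $p'_j$ moves past the reflex vertex that was occluding $g$'s view of $v$, so $p'_j$ sees $v$; since $v$ was previously unguarded, item (2) of the lemma follows at once.

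The main obstacle is verifying that $L_j$ remains a minimal v-configuration. I need to show that $p'_j$ itself is essential and that no old guard becomes redundant. The first holds because $p'_j$ uniquely covers $v$. For an old $h \in L_{j-1}$: if $h$ was essential as a cut vertex of $G_{L_{j-1}}$, then $h$ remains a cut vertex in $G_{L_j}$, since $p'_j$ joins only as a pendant and so no new cycle bypassing $h$ is created; if $h$ was essential because it uniquely covered some $u \in \partial P$, then I must guarantee $u \notin \partial V_{p'_j}$. For each such $u$ the locus of positions in $P'_j$ from which $u$ is visible is relatively open; since there are only finitely many uniquely covered vertices to avoid, and since $V_{p'_j}$ shrinks toward $V_g$ plus a small sliver near $v$ as $p'_j$ approaches $e_j$, the admissible region is a non-empty open subset of $P'_j$, from which I can pick $p'_j$ in general position with respect to $L_{j-1} \cup \partial P$. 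The delicate step is establishing non-emptiness of this admissible region, which I plan to handle by a continuity argument on the visibility map as $p'_j$ approaches $e_j$ from the $V_g$ side.
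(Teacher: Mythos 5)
Your overall route matches the paper's: find a {\sf v-edge} $e_j$ on the frontier between the guarded region and $P_j$, use general position to get a unique $g$, drop $p'_j$ into $P'_j$ just beyond $e_j$ so that it attaches as a pendant to $g$ alone, and argue that a new vertex becomes covered. But the way you anchor $e_j$ breaks the argument. You fix an unguarded vertex $v\in\partial P_j$ and claim some boundary segment of $P_j$ \emph{adjacent to $v$} is a cut segment having $v$ as an endpoint. Generically this cannot happen: a cut segment is a window edge of some $V_h$, lying on the ray from $h$ through an occluding reflex vertex $r$; its endpoints are $r$ (which is visible to $h$, hence already guarded) and the point where that ray next meets the boundary of $P$ (which, under the general-position assumption on $L_{j-1}\cup\partial P$, lies in the interior of a polygon edge, not at a vertex). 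Moreover, if no window passes near $v$ at all, both boundary pieces of $P_j$ at $v$ are simply sub-segments of the two polygon edges incident to $v$. So the polygon-vertex endpoint of a legitimate {\sf v-edge} is the \emph{guarded} reflex vertex $r$, not an unguarded vertex, and your subsequent step --- place $p'_j$ near $e_j$ close to $v$ so that it slips past the occluder and sees $v$ --- has no unguarded $v$ on $e_j$ to aim at. This is exactly where the paper inserts an extra device: it triangulates the piece of $P_j$ cut off by $e_j$, takes the apex $v$ of the triangle having $e_j$ as a side, and places $p'_j$ on the extension of the angle bisector at $v$ past $e_j$ into $P'_j$; the segment from $p'_j$ to $v$ then crosses $e_j$ and stays inside that triangle, so $p'_j$ provably sees a vertex of the unguarded piece. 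Without something of this sort, a point of $P'_j$ near $e_j$ may see only a vertex-free sliver of $P_j$, and item (2) fails.

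On the minimality of $L_j$ you are in fact more careful than the paper, which only verifies that $p'_j$ is essential and connects solely to $g$, and silently assumes the old guards remain essential. But your proposed fix is not yet a proof: as $p'_j\to e_j$ from inside $P'_j$, the visibility polygon $V_{p'_j}$ does not shrink to ``$V_g$ plus a small sliver'' --- it converges to the visibility polygon of the limit point on the window edge, which sees a whole chunk of $P_j$ through the window and possibly further vertices $u$ that were uniquely covered by some non-cut-vertex guard $h$. The non-emptiness of your admissible region is therefore still unestablished; it is the same hole the paper's own proof leaves open, so flagging it is valuable, but the continuity argument as sketched does not close it.
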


\begin{proof}
$\partial P_j  \neq \emptyset \implies P_j \neq \emptyset$, therefore, $\bigcup\limits_{g \in L_{j-1}}V_g$ is a strict sub-polygon of $P$ and so must have an $e_j$ which is a {\sf v-edge}. This $e_j$ is part of at least one visibility polygon, say of $g \in L_{j-1}$. Since all the guards and vertices i.e.,  $L_{j-1} \cup \partial P$ are in general position and a {\sf v-edge} contains at least one vertex of $P$, this gives us that $g$ is the unique guard that gives us $e_j$. By definition, $P'_j$ is the portion of $P$ that is visible only to $g$ among $L_{j-1}$. Now, consider the piece of $P_j$ that is bounded by $e_j$, let us triangulate it. Let $v$ be the vertex of P that is part of the triangle bounded by edge $e_j$ but not incident to $e_j$. Consider the angle bisector $l_v$ of $v$ w.r.t. this triangle of the angle opposite $e_j$ - extend it beyond $e_j$ into $P'_j$. Now, this portion of the bisector inside $P'_j$ has infinitely many candidate points for $p'_j$ such that $v$ is visible to it and by virtue of being in $P'_j$ is visible to exactly $g$ among $L_{j-1}$! Moreover, since $v$ is visible to this $p'_j$, we have $v \in \bigcup\limits_{h \in L_{j}}\partial V_h \setminus \bigcup\limits_{h \in L_{j-1}}\partial V_h$ completing the proof.
\end{proof}
}
Thus, Lemma \ref{lem:nodepth} along  with Lemma \ref{lem:twodiams} yields the following theorem:
\begin{theorem}
\label{thm:tilde_dv_alg}
There exist a distributed algorithm that solves the connected art gallery problem in $\mathcal{O}(\tilde{d}_v^2)$ time using no more than $n$ agents limited to proximity sensing capability, where $\tilde{d}_v$ refers to the minimal v-diameter of $P$ and $n$ is the number of polygon vertices.
\end{theorem}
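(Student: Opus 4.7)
The plan is to obtain the theorem as a direct composition of the two lemmas already established, namely Lemma~\ref{lem:nodepth} and Lemma~\ref{lem:twodiams}. The algorithm referred to in the theorem statement is the one obtained by repeatedly invoking Algorithm~\ref{alg:suboptimal} until every level of the territory tree has been explored; I would simply point to this construction and observe that it already meets both the correctness and the agent-count requirements imposed by the theorem.

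First I would argue correctness by appealing to Lemma~\ref{lem:nodepth}: the settled agents form a connected hierarchy (each newly spawned agent is visibly linked to its parent via a gap edge), and the territories $\territory(j)$ partition $P$ because every internal diagonal witnessed by an agent is either used to handoff a subregion to a child or already lies inside some existing territory. Since territories cover $P$ and each territory is guarded by its owner, the union of visibility polygons covers $P$ as required. The bound of $n$ agents follows because every agent except the root is uniquely associated with a distinct gap edge (a diagonal of $P$), and the number of such territories is bounded above by $n$.

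Next I would address the running time. Lemma~\ref{lem:nodepth} yields a bound of $O(d^2)$ rounds, where $d$ is the maximum diameter of any territory tree producible by the algorithm. The key step is to replace $d$ by the polygonal parameter $\tilde{d}_v$: this is precisely the content of Lemma~\ref{lem:twodiams}, which asserts $d = O(\tilde{d}_v)$. Plugging this into the $O(d^2)$ bound gives the desired $O(\tilde{d}_v^2)$ round complexity, completing the argument.

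The main obstacle in this proof is not the composition itself, which is straightforward, but rather making sure that the hypotheses of Lemma~\ref{lem:twodiams} are met by every configuration the algorithm can actually produce. In particular, I would want to verify that the marking scheme used in Algorithm~\ref{alg:suboptimal} (which prevents a child from being placed at a vertex visible to its grandparent) is really what guarantees that the longest root-to-leaf path of the territory tree induces a \emph{minimally} connected visibility subgraph --- this is exactly the step that allows the extension procedure of Algorithm~\ref{alg:minxtnd} to be applied without collapsing the diameter. Once this is confirmed, the theorem follows immediately by combining the two lemmas, so I would state the result as a direct corollary without re-deriving the internal details.
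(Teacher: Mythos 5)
Your proposal matches the paper's own argument exactly: the theorem is stated there as an immediate consequence of composing Lemma~\ref{lem:nodepth} (the $O(d^2)$-round, $\le n$-agent algorithm from iterating Algorithm~\ref{alg:suboptimal}) with Lemma~\ref{lem:twodiams} ($d = O(\tilde{d}_v)$). Your additional check that the marking scheme forces the longest root-to-leaf path to induce a minimally connected visibility subgraph is precisely the point the paper handles in the footnote inside the proof of Lemma~\ref{lem:twodiams}, so nothing is missing.
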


\noindent{\bf An $O(n)$ round Algorithm.}
While most real-world polygons may have small diameters, it is nevertheless conceivable that $\tilde{d}_v \in \Omega(n)$ in some cases like spirals. The algorithm that we have presented above will unfortunately require quadratic in $n$ number of rounds for such situations, which is  undesirable. \onlyShort{In the full version, we describe a depth first procedure that only requires $O(n)$ rounds.}
\onlyLong{
So to mitigate such situations, we now sketch a simple depth first procedure that runs in $O(n)$ rounds. We assume, as before, that $n$ agents start at $p_1$. Initially, they are all exploring agents and all exploring agents stay together. As the algorithm progresses, the agents (one-by-one) turn into settled agents that don't move. As before, agent 1 settles in $p_1$ (thus, it's now a settled agent and not an exploring agent) and takes charge of guarding $\territory(1) \triangleq \bar{V}_{p_1}^P$. The gap edges in $\territory(1)$ are numbered  in some well-defined order, say, the clockwise order. Assuming $\territory(1)$ has gap edges, the remaining $n-1$ unsettled agents move simultaneously to an arbitrarily chosen end point  (say $p_a$) of the  first gap edge (say $(p_a, p_b)$); agent 1 remembers $(p_a, p_b)$ as visited. Agent 2 settles at $p_a$ and guards $\territory(2) \triangleq \bar{V}_{p_a}^P \cap \crop(P,p_1, p_a, p_b)$; agent 2 remembers agent 1 as its parent. If $\territory(2)$ has gap edges, the remaining $n-2$ exploring agents repeat the process and explore the first gap edge in $\territory(2)$. Whenever an agent $j$ settles in an endpoint $p_x$ of some gap edge $(p_x,p_y)$, it guards $\territory(j) \triangleq \bar{V}_{p_x}^P \cap \crop(P,s_{\parent(j)}, p_x, p_y)$, where $s_{\parent(j)}$ denotes the location of the parent of agent $j$. Moreover, agent $j$ must keep track of the gap edges in $\territory(j)$ that have not been explored yet.  If $\territory(j)$ does not have gap edges, the exploring agents move to agent $j$'s parent and explore unvisited gap edges associated with the territory of the parent. When all gap edges associated with $\territory(j)$ have been explored, the remaining exploring agents move to agent $j$'s parent and explore unvisited gap edges (if any). One can observe immediately that the process is depth first and the graph on territories with graph-edges between two territories separated by a gap edge will form a tree because in our depth first exploration, we will not encounter the equivalent of a back-edge as that would imply a hole in the polygon. }

\begin{remark}
Both the $O(\tilde{d}_v^2)$ algorithm and the depth first $O(n)$ algorithm will maintain agents connected by line of sight. So we can start both algorithms simultaneously and, when one of them -- the winner -- finishes, the other can be terminated by a special terminate message that will take at most $O(\min(d_v, n))$ rounds to reach all. 
\end{remark}

\subsubsection*{General Problem Solving Given a Visibly Connected Guard Placement.} 
We have now provided two different algorithms that take, respectively, $O(\tilde{d}_v^2)$ rounds and $O(n)$ rounds. Throughout the course of the algorithms, the agents stay connected through visibility links. So with sufficient agents (at most $O(n)$), we can execute both algorithms simultaneously. When one of them terminates and the root is aware of the termination, the other algorithm can be terminated prematurely along with a request for all agents in the prematurely terminated algorithm to collect at the root. This will only require $O(\tilde{d}_v)$ rounds; each such agent $a$ waits for all its children to reach its position and they can then collectively move to the parent of $a$. Thus, we only require $O(\min(\tilde{d}_v^2, n))$ rounds in total.
\enlargethispage{2\baselineskip}
Moreover, since the agents are settled into a visibly connected guarding position and are aware of their respective territories in the territory tree, they can gather at the root's position in a bottom-up fashion. Thus, the root, in $O(\tilde(d)_v)$ rounds, can collect the views of all the agents and perform computation using  the collective views of the agents. Thus, we get the following generalized theorem.

\begin{theorem}
Suppose ${\cal P}$ is a computationally tractable problem that takes a polygon $P$ as input and either 
\begin{itemize}
\item  outputs information in the form of bits
\item or requires placing agents in positions within $P$.
\end{itemize} 
Then, ${\cal P}$ can be solved in our distributed context in $O(\min(\tilde{d}_v^2, n))$ rounds. 
\end{theorem}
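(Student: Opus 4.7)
The plan is to leverage the visibly connected guarding as a distributed infrastructure that emulates a centralized computation. By the remark preceding the theorem, running the $O(\tilde{d}_v^2)$ breadth-first algorithm and the $O(n)$ depth-first algorithm in parallel (with a terminate-signal between them) yields a settled territory tree in $O(\min(\tilde{d}_v^2, n))$ rounds, and any remaining unused agents can be recalled to the root within the same asymptotic budget. At this point the agents form a rooted tree whose depth is $O(\min(\tilde{d}_v, n))$, every agent knows its parent and children along visibility links, and collectively the agents' territories partition $P$.

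Next, I would perform a bottom-up convergecast on the territory tree. Each agent $j$ locally encodes its territory (the vertices, edges, and gap edges that bound $\territory(j)$, together with the identity of $\parent(j)$ and the local incidence information with children) into a message, and forwards the aggregate to its parent after receiving aggregates from all children. Because the territories partition $P$ and the gap-edge incidences glue them together, the root eventually holds a faithful combinatorial description of the whole polygon $P$. Since $\cal P$ is computationally tractable, the root solves the instance locally, producing either (i) an output string of bits, or (ii) a list of target positions at which to place agents.

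For output of type (i), the root is done once the local computation terminates; if the answer needs to be known by every agent, it broadcasts it down the tree in $O(\min(\tilde{d}_v,n))$ rounds. For output of type (ii), the root first computes, for each target position $q$, which territory $\territory(j)$ contains $q$ (this is purely a centralized geometric step and hence charged to the tractability of $\cal P$). The root then assigns a fresh agent to each target and dispatches placement instructions down the tree in the same way that Algorithm~\ref{alg:suboptimal} dispatches new agents to children: each settled agent receives a batch destined for its subtree, forwards the batches for each child to the appropriate child, and keeps the agents addressed to positions inside its own territory. Since each surviving agent $j$ sees all of $\territory(j)$ (modulo the vertex-limiting cuts, which affect only the gap-edge boundary and not points inside territories), it can in one additional step direct each assigned agent to walk straight to its target position inside $\territory(j)$.

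The main obstacle is bounding the round complexity of the convergecast/broadcast, since the messages can in principle be large (a whole triangulation or a list of target positions). The saving grace is that we are counting rounds rather than bits, and messages may be arbitrarily long per the model used throughout the paper; each aggregation or dispatch step then costs one round per tree level, and the tree has depth $O(\min(\tilde{d}_v, n))$ by construction. Summing the guarding phase, the convergecast, the (local) centralized computation, and the broadcast/dispatch gives the claimed $O(\min(\tilde{d}_v^2, n))$ round bound, completing the proof.
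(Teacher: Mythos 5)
Your proposal is correct and follows essentially the same route as the paper: run the breadth-first and depth-first guarding algorithms in parallel, terminate the loser, convergecast the agents' views/territories to the root in a bottom-up fashion, solve $\cal P$ centrally at the root, and then broadcast bits or dispatch agents down the territory tree. The paper states this only in a brief paragraph before the theorem; your write-up fills in the mechanics (especially the dispatch of agents to target positions) but introduces no new idea or different decomposition.
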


\section{Distributed Guarding With  Distance Perception} \label{sec:algo2}
\enlargethispage{2\baselineskip}
In this section, we give a distributed algorithm that solves the connected art gallery problem in $O(D^2)$ rounds, where $D$ is the (unweighted) diameter of the medial axis. This algorithm's advantage is that its running time depends on $D$, which is more well-known than $\tilde{d}_v$. However, as opposed to the previous section, in this case, agents require the ability to perceive depth. \onlyShort{The key idea of the algorithm here is if agents were placed on all internal nodes of the medial axis and some specially chosen vertices, they cover the entire graph as well as remain visibly connected.}%

\onlyLong{The key idea of the algorithm here is if agents were placed on all internal nodes of the medial axis and some specially chosen vertices, they cover the entire graph as well as remain visibly connected. We describe briefly regarding the procedure in which the algorithm computes adjacent nodes in the medial axis and how the algorithm assigns/places agents, and later describe the pseudo-code.}

\para{Computing adjacent nodes in the medial axis} Imagine an agent at any point $x$ on the medial axis. The agent can simulate the creation of a maximal disc at $x$ to find the objects (vertices or edges) that determine $x$, i.e., the objects due to which $x$ is a part of the medial axis. There would be at least two such objects that determine $x$. If $x$ is determined by multiple objects ($>2$), it implies that $x$ itself is a node on the medial axis, and we can consider any two consecutive objects determined by the \textbf{look} operation. For example, if $a,b,c,d$ are 4 objects, that determine $x$ and are ordered in accordance with the look operation, we consider the pairs $ab, bc, cd$ and $da$ only. Note that, only the consecutive object pairs determine the medial axis edges incident at $x$, and hence only those are considered.

For each pair of objects $ob_1$ and $ob_2$, there can only be three possible cases; either both are vertices, both are edges, or one of them is a vertex while the other is an edge. For all the cases, the agent at $x$ is aware of the structure of the medial axis from $x$. Thus, if both $ob_1$ and $ob_2$ are vertices, then the next node of the medial axis lies on the perpendicular bisector of the line segment $(ob_1,ob_2)$. If both $ob_1$ and $ob_2$ are edges, then the next node of the medial axis lies on the angle bisector of $ob_1$ and $ob_2$. Lastly, w.l.o.g. if $ob_1$ is a vertex and $ob_2$ is an edge, then the next node of the medial axis lies on the parabola determined by $ob_1$ and $ob_2$. Since agents have infinite computing power and depth sensing ability, they can progressively simulate maximal discs along the medial axis structure (perpendicular bisector, angle bisector or parabola) until the maximal disc encounters a new object (say $ob_3$). The center of the maximal disc at this instance determines the next adjacent node in the medial axis. 
We define the set of new adjacent nodes obtained in phase $i$ as $A_i$.

\para{Agent placement} As in Algorithm \ref{alg:suboptimal}, when an already placed agent $a$ determines its adjacent set of positions on which new agents are to be placed, then $a$ upcasts the request of the required number of agents up to the root (the spot initially containing all the agents) with intermediate agents aggregating the quantities by adding up the numbers sent by their children. The root serves the request by assigning the required number of agents. The assigned agents trace back the path to $a$ and thereafter get placed in their determined spot. 

\enlargethispage{\baselineskip}
\small
\begin{algorithm} %
\caption{An $O(D^2)$ time algorithm for the connected art gallery problem.} %
\label{alg:medial} %
\begin{algorithmic}[1]
\STATE Starting from the initial given vertex $v$ where all the agents are placed, a medial axis point $m$ is determined. 
If $v$ is convex, then $m$ is given by $v$'s adjacent node in the medial axis. %
 Alternatively, if $v$ is a reflex vertex, pick the nearest visible new object $ob_3$ (determined by the depth sensing ability of the agents) not including $v$ and choose the center of the maximal disc determined $v$ and $ob_3$ as the point $m$ on the medial axis. 
\STATE Consider \onlyLong{the determined medial axis point} $m$ as the root. All agents are moved here.
\STATE Determine all the adjacent medial axis nodes from $m$ (i.e., the set $A_1$). \textit{$\backslash *$ This marks the end of the first phase. Each iteration of the loop represents a subsequent phase. The algorithm continues until the entire medial axis is uncovered. $*\backslash$}
     \FOR {each new adjacent node $x \in A_{i-1}$ determined in the previous phase, that is not a leaf node of the medial axis tree}
        \STATE {Compute set $A_i$ (current set of new adjacent medial axis nodes of $x$) in parallel.}
        \STATE {Place an agent at each node $y \in A_i$ except when $y$ corresponds to a convex vertex of the polygon (leaf node of the medial axis tree).}
        	\IF {$y$ is a part of a parabola determined by a polygon vertex and an edge and the polygon vertex does have an agent on it}
        	\STATE {Place an agent on the reflex vertex determining the parabola.}
        	\ENDIF
    \ENDFOR
\end{algorithmic}
\end{algorithm}
\normalsize

\begin{lemma} \label{lem:vcp}
Algorithm \ref{alg:medial} gives a visibly connected guard placement while ensuring that the entire polygon is guarded/covered.
\end{lemma}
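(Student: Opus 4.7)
The plan is to establish two things separately: that the placed guards cover $P$ (i.e., $\bigcup_g V_g = P$), and that their visibility graph is connected. Both will follow from structural properties of the medial axis $M$, combined with what the algorithm places at internal nodes and at the selected reflex vertices.

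For coverage, I would use the defining property of $M$: every point of $P$ lies in a maximal inscribed disc centred at some point of $M$, and every such disc is contained in $P$. For each edge $e$ of $M$ with endpoints $u,v$ and defining objects $o_1,o_2$, let $R_e$ denote the union of maximal discs centred at points of $e$; the family $\{R_e\}_e$ then covers $P$, so it suffices to show each $R_e$ is covered by a placed guard. When $e$ is straight (vertex--vertex or edge--edge), both endpoints of $e$ see all of $R_e$ via the ``ribs'' of the discs along $e$. When $e$ is parabolic (vertex--edge), the defining reflex vertex together with the two endpoints covers $R_e$; this is exactly why the algorithm adds a guard at that reflex vertex. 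For a leaf edge, the leaf is a convex vertex at which the maximal disc degenerates to a point, so the internal endpoint alone suffices. Since the algorithm places guards at every internal node plus the needed reflex vertices, coverage follows.

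For connectivity, I would show that the visibility graph of the placed guards contains a spanning subtree that mimics the tree structure of $M$. For each medial-axis edge $e$ between two internal nodes $u$ and $v$: if $e$ is straight, the chord $\overline{uv}$ lies in $R_e \subseteq P$, so $u$ and $v$ are mutually visible; if $e$ is parabolic, the defining reflex vertex $r$ (a placed guard) lies between $u$ and $v$, with both $\overline{ur}$ and $\overline{rv}$ inside $P$, so $u$ and $v$ are connected through $r$. Removing the degree-1 convex-vertex leaves of $M$ leaves a connected subtree on the internal nodes, so the induced visibility subgraph is connected. The main obstacle I anticipate is the rigorous verification of the parabolic case, since the reflex vertex can destroy direct visibility between $u$ and $v$: one must argue that $\overline{ur}$ and $\overline{rv}$ stay within the respective sub-tubes of $R_e$, and that $R_e$ is fully covered by $\{u,v,r\}$ rather than $\{u,v\}$ alone. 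Once that geometric lemma is in place, the rest of the proof assembles cleanly from the tree structure of $M$ and the medial-axis cover of $P$.
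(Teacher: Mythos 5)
Your proposal matches the paper's proof in essence: both decompose $P$ along medial-axis edges into the regions swept by the maximal inscribed discs, verify that each such region is seen by the guards at the edge's endpoints (plus the defining reflex vertex in the parabolic case), and derive connectivity from the tree structure of $M$ with the reflex vertex bridging each parabolic edge. The one step you flag as an obstacle is closed in the paper by a short observation you can adopt directly: the maximal disc centred at each endpoint of a parabolic edge passes through the defining reflex vertex $r$, so the segment from that endpoint to $r$ lies inside the disc and hence inside $P$.
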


\onlyLong{
\begin{proof}
To show that the agents guard the entire polygon, we consider a decomposition of the polygon inspired by the classical domain decomposition lemma \cite{Choi97mathematicaltheory}. The proof follows by showing that, each decomposed part is fully guarded by the agents and the union of the decomposed parts determines the entire polygon.

Recall that the medial axis tree of a polygon consists of both line segments and parabolas. By a slight abuse of notation, we consider an edge of the medial axis to either be a line segment or a parabola determined by its end points. %
Consider any non-leaf edge of the medial axis tree. We create the decomposition induced by the medial axis edges as follows. %
A decomposed part is a sub-division of the polygon bounded by the objects that determine the medial axis edge (including additional objects that might just determine the medial axis edges' end points) and the maximal discs centered at both end points of the medial axis edge.
For leaf edges (edges containing a convex polygon vertex), the decomposed part is bounded by the polygon edges that meet at the convex vertex along with the maximal disc centered at the adjacent medial axis node (other node of the medial axis leaf edge). It is easy to see that each decomposed part is a closed figure.
It is also to be noted, since the medial axis is a connected tree, the maximal discs centered at all internal medial axis nodes (non-leaf nodes) are a part of at least 2 adjacent decomposed parts. Consequently, the adjacent decomposed part begins exactly at the polygonal edge termination point of the previous decomposed part. 

Note that, consecutive decomposed parts are continual polygonal parts with overlapping maximal discs at the ends (with no missing portions). Since the medial axis spans the entire polygon, and the decomposition happens along the medial axis edges, it is not difficult to visualize that the union of all decomposed parts indeed gives us the entire polygon.

To show that the agents are visibly connected, we rely on the structure of the medial axis tree. Clearly, visible connectivity is maintained across all edges of the medial axis that are line-segments. If all edges of the medial axis are line segments, visible connectivity immediately follows from the connected tree structure of the medial axis. The only condition that visible connectivity might be lost is due to the existence of parabolas in the medial axis. We preserve the visible connectivity by placing an agent at each vertex that determines a parabola on the medial axis (see Algorithm \ref{alg:medial}). Consider any parabolic edge $(a,b)$ on the medial axis determined by a vertex $v$ and a polygon edge $l$. Since there exists a maximal disc centered at $a$ (resp. $b$), that touches $v$ (by the property of the medial axis), it implies that $v$ is visible from $a$ (resp. from $b$). In the presence of a polygonal boundary blocking visibility, such a maximal disc would not have been possible. This shows that visible connectivity is maintained across all parabolic edges. The overall visible connectivity follows from the connected tree structure of the medial axis and the fact that visible connectivity is maintained across all medial axis edges.
\end{proof}
}

\begin{theorem}
There exists a distributed algorithm that solves the connected art gallery problem in $\mathcal{O}(D^2)$ time using no more than $n$ agents, where $D$ refers to the medial axis diameter and $n$ is the number of polygon vertices.
\end{theorem}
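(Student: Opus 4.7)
The plan is to verify the two quantitative claims of the theorem, namely the round complexity and the agent budget, since correctness (visibility coverage plus visible connectivity) has already been established by Lemma~\ref{lem:vcp}. Throughout, I would fix the root medial axis point $m$ as computed in the first step of Algorithm~\ref{alg:medial} and view the algorithm as incrementally exposing the medial axis tree $M$ rooted at $m$ in a breadth-first manner, with one phase per level of $M$.

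For the round complexity, I would argue as follows. In phase $i$, every agent already placed at a node in $A_{i-1}$ uses its depth-perception to simulate maximal discs along the local medial axis structure (perpendicular bisector, angle bisector, or parabolic arc, depending on the two determining objects) until a new object is hit; this local computation is $O(1)$ rounds since each agent performs a single \textbf{look} together with purely local geometry. The only non-local cost is the request/dispatch of fresh agents, which follows the same up-cast/down-cast scheme as in Algorithm~\ref{alg:suboptimal}: each internal agent aggregates requests from its children, forwards them up to the root along the already-placed path in $M$, and the root sends the needed agents back along that path. The length of this path is bounded by the height of $M$ from $m$, which is $O(D)$. Since the number of phases equals the depth of $M$ rooted at $m$, also $O(D)$, the total cost is $O(D)\cdot O(D) = O(D^2)$ rounds.

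For the agent budget, I would appeal to the earlier claim that $M$ has $O(n)$ nodes, whose leaves are exactly the convex vertices of $P$. Agents are placed only (i) on non-leaf medial axis nodes and (ii) on reflex vertices that determine a parabolic edge of $M$ and do not already have an agent (the guard on the root $m$ is charged to either case, depending on whether $v$ was convex or reflex). Reflex vertices of $P$ are disjoint from convex vertices and from the interior nodes of $M$, so each placed agent is charged to a distinct element of $\partial P \cup (\text{internal nodes of } M)$. Since the number of leaves of $M$ is at most the number of convex vertices, a careful accounting (essentially trading each internal medial axis node against the reflex vertices it can be paired with, together with the fact that the sum of convex and reflex vertices of $P$ is $n$) yields the bound of at most $n$ placed agents; no additional agents are needed because the initial dispatch uses the $n$ agents co-located at the starting vertex, and each is placed at most once.

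The main obstacle I expect is the agent-count bound rather than the time bound. Tying the number of guarded positions (internal medial axis nodes plus the extra reflex-vertex guards added for parabolic edges) to the polygon's vertex count $n$ requires a slightly delicate charging argument, because a reflex vertex can participate in multiple parabolic edges of $M$, and one has to observe that the ``place an agent on the determining reflex vertex'' step fires at most once per such vertex (hence the condition \emph{``does not already have an agent''} in Algorithm~\ref{alg:medial}). Once that is pinned down, the remainder is bookkeeping using the $O(n)$ bound on the size of the medial axis together with the disjointness of convex and reflex vertices.
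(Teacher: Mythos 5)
Your round-complexity argument is exactly the paper's: $O(D)$ phases (one per BFS level of the medial axis tree rooted at $m$), each costing $O(D)$ rounds for the up-cast of requests and down-cast of fresh agents along the already-built tree, and correctness delegated to Lemma~\ref{lem:vcp}. That part is fine and needs no further comment.

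The gap is in the agent budget, precisely the step you yourself flag as delicate and then leave as ``a careful accounting \dots yields the bound.'' The pairing you gesture at --- ``trading each internal medial axis node against the reflex vertices it can be paired with'' --- points in the wrong direction: the reflex vertices are already being spent on the parabolic-edge guards, so you cannot also charge the internal medial-axis nodes to them without double counting. The paper's actual count charges the internal nodes to the \emph{convex} vertices instead: if $P$ has $c$ convex and $r$ reflex vertices with $c+r=n$, then the leaves of the medial axis tree are exactly the convex vertices, so the tree has $c$ leaves and hence at most $c-1$ internal (non-leaf) nodes; agents sit only on the root point $m$, on internal medial-axis nodes, and on reflex vertices determining parabolic edges (at most once each, by the ``does not already have an agent'' guard you correctly identified), giving at most $1+(c-1)+r = n$ agents. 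You had all the ingredients on the table --- leaves of $M$ are the convex vertices, reflex and convex vertices are disjoint, $c+r=n$ --- but the missing observation is the elementary tree fact bounding the number of internal nodes by the number of leaves; without it (or with the internal-nodes-versus-reflex-vertices pairing you propose) the bound of $n$ does not follow.
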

\onlyLong{
\begin{proof}
The correctness of the algorithm follows directly from Lemma \ref{lem:vcp}. The time complexity is determined by the phases of the algorithm.  \onlyLong{Each phase takes at most $O(D)$ rounds and the number of phases is at most $O(D)$, thereby giving us an $O(D^2)$ algorithm. 

}
Starting from a medial axis point $m$, in each phase, agents get placed on all the adjacent medial axis nodes (for the case of parabolic path, an additional agent gets placed at the reflex vertex determining the parabola). As the diameter of the medial axis is $D$, there can be at most $O(D)$ phases. Additionally, in a phase, the agent placement procedure can request new agents from the root; this upcast and response takes place over the medial axis and can take up to $O(D)$ rounds. This validates the previously stated statement of having at most $O(D)$ phases with each phase taking at most $O(D)$ rounds, thereby giving us an $O(D^2)$ algorithm. 

Next, we show that the number of agents required is $\leq n$. Note that agents are only placed on the initially determined medial axis point $m$ or on the internal medial axis nodes (the non-leaf nodes of the medial axis) or on reflex vertices that determine a parabolic edge of the medial axis. 
In the given polygon $P$, let it contain $c$ convex nodes and $r$ reflex nodes, i.e., $c+r =n$. %
Observe that, any convex vertex of the polygon $P$ is a leaf in the medial axis tree of $P$. Conversely, any leaf of the medial axis tree of $P$ is also a convex vertex of $P$. This implies that the medial axis has $c$ leaves. Since the medial axis is a tree, the maximum number of internal nodes can be at most $c-1$. %
We consider the worst case, where agents are placed on all $r$ reflex vertices. Thus, the total number of agents placed equals $1+(c-1)+r$ which is $\leq n$. This completes the theorem proof.
\end{proof}
}
To reduce the final number of guards placed, we use similar procedure as described in Section \ref{sec:algo1}. This gives us the following theorem.

\begin{theorem} \label{thm:nodepth}
There exists a distributed algorithm that uses fewer than $n$ agents to compute the placement of at most $\lfloor n/2 \rfloor -1$ guard agents in a visibly connected manner, when the agents have depth sensing ability. Moreover, this algorithm takes at most $O(D^2)$ communication rounds.
\end{theorem}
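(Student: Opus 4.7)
The plan is to bolt a post-processing reduction onto Algorithm~\ref{alg:medial} in the same spirit as the wrap-up at the end of Section~\ref{sec:algo1}. First, I would run Algorithm~\ref{alg:medial} verbatim; by the preceding theorem this produces, in $O(D^2)$ rounds, a visibly connected guarding that uses no more than $n$ agents, with the agents organized as a rooted tree along the medial axis (augmented by the reflex vertices determining parabolic edges). The connectivity of this network, guaranteed by Lemma~\ref{lem:vcp}, provides the communication backbone I will exploit in the reduction phase.

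Next, since each agent has depth perception, every placed agent can locally compute the exact geometry of its vertex-limited visibility polygon $\bar{V}_g^P$ together with the identities of the polygonal vertices and edges it sees. I would then run a standard convergecast along the visibility tree, concatenating these local views into a complete description of $P$ at the root; because the territories cover $P$ by Lemma~\ref{lem:vcp} and the tree has depth $O(D)$, this costs $O(D)$ rounds. At the root, I would then execute the centralized Algorithm~\ref{alg:central} locally, which in $O(n)$ local time yields a target set $G^\star$ of at most $\lfloor n/2\rfloor -1$ visibly connected guard positions derived from a triplet decomposition of the weak dual graph of some triangulation of $P$.

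Finally I would broadcast $G^\star$ down the visibility tree (another $O(D)$ rounds), assign one surviving agent per target position, and let every other agent retract back up the tree to the root. Here the key subtlety, and the main obstacle, is ensuring that visibility connectivity is not broken in transit: the old placement and the new placement need not share any positions, so I cannot simply teleport agents. I would handle this by a two-phase reconfiguration — first route a dedicated agent from the initial network to each position in $G^\star$ while keeping all incumbents in place (so that both the old tree and the growing fragment of the new one stay connected through shared line-of-sight at each medial-axis node or gap edge), and only once every $g \in G^\star$ is occupied do the redundant agents withdraw in a bottom-up wave along the old tree. A relay argument, analogous to the one used at the end of Section~\ref{sec:algo1}, shows that this movement preserves a connected visibility graph throughout.

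Adding up the costs — $O(D^2)$ for the initial Algorithm~\ref{alg:medial} placement, $O(D)$ for the convergecast, $O(1)$ rounds (plus local time) for the centralized computation at the root, and $O(D)$ rounds for broadcast and reconfiguration — yields the claimed $O(D^2)$-round bound, while the final count of stationed guards is at most $\lfloor n/2\rfloor -1$ by correctness of Algorithm~\ref{alg:central}. Since $\lfloor n/2\rfloor - 1 < n$, the algorithm uses strictly fewer than $n$ agents in its final configuration, as required.
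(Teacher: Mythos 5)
Your proposal is correct and follows essentially the same route as the paper, whose own ``proof'' is just a pointer to the post-processing sketched at the end of Section~\ref{sec:algo1}: convergecast the agents' views to the root over the already-connected visibility network, compute a $\lfloor n/2\rfloor-1$ placement centrally (via Algorithm~\ref{alg:central}), and redeploy -- all within the $O(D^2)$ budget of Algorithm~\ref{alg:medial}. Your treatment is in fact more careful than the paper's, particularly on preserving visibility connectivity during the reconfiguration, which the paper leaves implicit.
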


\section{Lower Bound} \label{sec:lb}
\enlargethispage{\baselineskip}
In this section, we give lower bounds for a slightly weaker polygon exploration problem that requires for every point in $P$, that some agent must have been within line of sight of that point at some time instant during the course of the algorithm. Clearly, any solution to the visibly connected guard placement problem will also be a solution for the exploration problem. The lower bounds highlight the criticality of parameters like the medial axis diameter $D$ and the minimal v-diameter $\tilde{d}_v$ for solving the connected art-gallery problem.
The main result is summarised by the following Theorem.

\begin{theorem}
\label{thm:lb}
For every deterministic distributed guard placement algorithm $A$ with distance perception (resp., proximity perception)
there exists a polygon $P$ with medial axis diameter $D \in o(\log n)$ (resp., with minimal v-diameter $\tilde{d}_v \in o(\log n)$) such that $A$ requires $\Omega(D^2)$ time
(resp., $\Omega(\tilde{d}_v^2)$ time) 
to place the guards even when $A$ is provisioned with a number of guards that is $\Theta(n)$.
\end{theorem}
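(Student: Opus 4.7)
The plan is to adversarially construct, for each algorithm $A$ and each target value of $D$, a polygon on which $A$ is forced to spend $\Omega(D^2)$ rounds, with medial axis diameter $\Theta(D)$ and with $D \in o(\log n)$ achievable by appropriate scaling. The candidate construction, which I expect to be the main technical obstacle, is a polygon whose medial axis is a depth-$\Theta(D)$ tree: a single narrow ``trunk'' corridor of length $\Theta(D)$ runs from the agents' starting vertex to a central junction, from which $\Theta(D)$ radial ``limb'' sub-corridors emerge, each of length $\Theta(D)$ and each terminating in a small chamber that holds $\Theta(n/D^2)$ vertices of hidden structure. The total vertex count is $\Theta(n)$, the medial-axis diameter is $\Theta(D)$, and by tuning $D$ one obtains $D \in o(\log n)$. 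For the proximity-perception case the same polygon works, since one checks directly that $\tilde{d}_v = \Theta(D)$ on it.

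The core argument is an adaptive adversary that commits to the precise layout of each limb's chamber only at the first instant at which some agent could have line of sight into it. Because all agents originate from a single vertex and must maintain visibility-based connectivity throughout, they cannot sense into a chamber at corridor-depth $d$ without first stringing a visibility chain of $\Omega(d)$ agents along the corresponding limb. From this I would derive two composing lower bounds. First, a movement-connectivity bound: placing the first guard inside any given chamber requires $\Omega(D)$ rounds from the moment that chamber first becomes relevant to $A$, because of the $\Omega(D)$ limb length combined with the line-of-sight constraint. Second, a coordination bound: to fix a correct final guarding inside a chamber, $A$ must perform at least one round trip between the chamber and the root, again $\Omega(D)$ rounds. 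The adversary then chooses the chamber contents so that the $\Theta(D)$ commitments must be made sequentially rather than pipelined, yielding $\Theta(D) \cdot \Theta(D) = \Omega(D^2)$.

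The subtle step is defeating spatial pipelining: with $\Theta(n)$ agents the algorithm can try to pre-flood all $\Theta(D)$ limbs in parallel. I would handle this by letting the adversary make each chamber's final geometry depend on $A$'s earlier guard placements elsewhere in the polygon (for instance, by widening, narrowing, or inserting reflex vertices into a still-unseen chamber as a function of which vertices $A$ has already claimed), so that any speculative pre-committed guarding is incorrect for some consistent adversary strategy. A standard indistinguishability-and-state-exchange argument then shows that $A$ cannot commit a final correct guarding inside a chamber without an $\Omega(D)$ information round trip, and the visibility-connectivity requirement prevents these round trips from being time-overlapped across chambers. A simple potential function tracking the number of chambers with a fully committed correct guarding then yields the $\Omega(D^2)$ bound. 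The $\Omega(\tilde{d}_v^2)$ statement follows from the same analysis applied to the minimal v-configurations of the same polygon.
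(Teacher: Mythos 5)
Your geometric idea --- realizing a tree as a polygon of narrow corridors and small chambers so that the medial axis diameter and $\tilde{d}_v$ are both $\Theta$ of the tree diameter --- matches the paper's construction (the paper embeds a complete $\Delta$-ary tree with no two adjacent edges collinear, thickens it, and checks $d_{\T}\in\Theta(D)$ and $d_{\T}\in\Theta(\tilde{d}_v)$). But the paper does \emph{not} prove the $\Omega(D^2)$ bound by a bespoke adversary argument: it reduces to the known lower bound for collaborative tree exploration (Theorem 2.5 of \cite{DISSER2018}), which already states that $\Theta(n)$ agents need $\Omega(d_{\T}^2)$ rounds to explore some tree of depth $d_{\T}\in o(\log n)$. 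All of the difficulty you flag as ``the subtle step'' --- defeating parallel exploration by $\Theta(n)$ agents --- lives inside that cited theorem, and your sketch does not actually supply a replacement for it.

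Concretely, three steps in your argument would fail. First, your trunk-plus-$\Theta(D)$-limbs polygon is explorable in $O(D)$ rounds up to chamber contents: all agents walk the trunk together, split into $\Theta(D)$ groups at the junction, and descend all limbs simultaneously; so the ``movement-connectivity bound'' yields only $\Omega(D)$, and the entire $\Omega(D^2)$ must come from the information argument. Second, that information argument is unsupported: there is no reason a correct visibly connected guarding of a chamber requires a round trip to the root (guards can be chosen locally from what an agent in the chamber sees), and even granting one round trip per chamber, the limbs are disjoint, so chains of agents in different limbs can relay messages concurrently --- the assertion that ``the visibility-connectivity requirement prevents these round trips from being time-overlapped'' is exactly the claim that needs proof and appears false for this geometry. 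The hard instances in \cite{DISSER2018} are recursively branching trees where the adversary adaptively hides which subtrees are deep, not a depth-two star of corridors. Third, your argument leans on agents ``maintaining visibility-based connectivity throughout,'' but the problem only constrains the final placement (and the paper deliberately lower-bounds the weaker \emph{exploration} problem, which imposes no connectivity at all); a lower bound that assumes all algorithms keep the swarm connected at every round proves a strictly weaker statement than the theorem. The clean fix is to replace your adversary sketch with the reduction: take a hard tree instance from \cite{DISSER2018}, embed and thicken it as you (and the paper) describe, and observe that any guard-placement algorithm on the polygon induces a tree-exploration strategy with only constant-factor overhead per corridor traversal.
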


\noindent Our strategy is to show a reduction, specifically, we will reduce any instance of the well-studied tree exploration game \cite{DISSER2018} to the problem of placing guards in an unexplored polygon. \onlyShort{We briefly sketch our approach and defer details to the full version of the paper.}
\begin{enumerate}
    \item Firstly, we embed the tree in the Euclidean plane such that no two edges overlap excepting at a shared node and no two adjacent edges form an 180 degree angle.
    \item We {\it thicken} the edges of the embedded tree and consider the boundary of the union of the thickened edges to form a simple polygon.
    \item With the embedding and thickening transformations, we map the problem of collaborative exploration of the underlying tree to the guard placement in the obtained polygon. %
\end{enumerate}

\onlyLong{
\begin{lemma} \label{lb:embed} Given a complete $\Delta$-ary tree \T(V,E) with height $h$, there exists an embedding, \embed{\T}, in which no two {\it adjacent} tree edges are at an angle of $180$ degrees %
and no two tree edges touch each other, except when they share a common vertex.  %
\end{lemma}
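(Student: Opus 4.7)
The plan is to construct $\embed{\T}$ by induction on the height $h$, using a standard shrinking-cone recursive embedding. The base case $h=0$ is a single vertex placed anywhere. For the inductive step, I would place the root $r$ at the origin and partition the unit disk around $r$ into $\Delta$ disjoint open cones $C_1, \ldots, C_\Delta$, each of angular width strictly less than $2\pi/\Delta$ and with a small nonzero angular gap separating consecutive cones. The $i$-th child $c_i$ is placed at distance $1$ from $r$ along the bisecting ray of $C_i$. The subtree rooted at $c_i$ is then embedded recursively inside a small disk $B_i$ centered at $c_i$, scaled down enough that (a) $B_i$ lies strictly inside $C_i$, (b) $B_i$ does not contain $r$, and (c) every edge of the embedded sub-structure remains inside $C_i$.

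This construction would immediately yield the non-touching property. Edges of different child subtrees are contained in the disjoint cones $C_i$ and therefore cannot touch each other; the root-to-$c_i$ edge lies in $C_i$, as do all edges of the subtree hanging below $c_i$; and the no-touch property within each subtree follows by the inductive hypothesis. Scaling and translation preserve straight-line segments, so the non-crossing invariant is maintained at each recursive level.

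The remaining obligation is to forbid $180$-degree angles at every internal vertex. At $c_i$, this means the direction from $c_i$ to $r$ must not be collinear with the direction from $c_i$ to any of its children. The set of rotations of the inductively constructed subtree at $c_i$ for which some outgoing child direction happens to be antiparallel to $\overline{c_i r}$ is a finite union of measure-zero subsets of $SO(2)$. Hence a generic rotation of the subtree satisfies the condition; since $C_i$ has positive angular width, we can further shrink $B_i$ to make the angular footprint of the subtree at $c_i$ as thin as required, so the rotation still fits. Equivalently, one can fix all child-directions at every internal node to be spaced by an irrational multiple of $\pi$, simultaneously eliminating every collinearity.

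The main obstacle I anticipate is the bookkeeping of scales and the simultaneous verification of both invariants across recursion levels: one must check that after each scale-and-rotate operation the newly placed subtree neither intrudes on a sibling's cone nor creates a new antiparallel pair at $c_i$. This is handled by choosing the radius of $B_i$ small enough that $B_i \subset C_i \setminus \{r\}$, and by performing the generic rotation before the scaling, so that angular constraints and positional constraints are decoupled. The induction then produces the required embedding $\embed{\T}$.
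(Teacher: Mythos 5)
Your recursive cone construction is a genuinely different route from the paper's (the paper lays the levels of the tree out on concentric circles centred at the root and obtains the no-$180$-degrees property combinatorially, by first building a $(\Delta+1)$-ary tree and then deleting, at each vertex, the one child whose edge could be collinear with the grandparent--parent edge). Your approach could be made to work, but as written it has a concrete gap in the non-touching argument.

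The unaddressed pair is the new edge $\overline{r c_i}$ versus the edges of the recursively embedded subtree of $c_i$ that are \emph{not} incident to $c_i$. The segment $\overline{r c_i}$ terminates at the centre of $B_i$, so it necessarily penetrates $B_i$; none of your conditions (a)--(c) nor the inductive hypothesis (which only governs edges internal to the subtree) prevents it from crossing a subtree edge inside $B_i$. Your three sub-claims for non-touching only separate distinct cones at $r$; the statement ``the root-to-$c_i$ edge lies in $C_i$, as do all edges of the subtree hanging below $c_i$'' places both objects in the \emph{same} cone and hence proves nothing about their mutual disjointness. The genericity argument does not close this gap either: a measure-zero avoidance only guarantees that the direction from $c_i$ back to $r$ is not exactly antiparallel to a child direction, but if that direction lies strictly inside one of $c_i$'s child cones at a small nonzero angle $\theta$ to its bisector, the segment $\overline{r c_i}$ can still slice through the grandchild disk $B_{i,j}$ (this happens whenever the radius of $B_{i,j}$ exceeds $|c_i c_{i,j}|\sin\theta$), and rescaling cannot repair it because the offending ratio is scale-invariant. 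The fix is to strengthen the inductive hypothesis: embed every height-$k$ subtree inside a cone of prescribed angular half-width with apex at its own root (equivalently, reserve one of $\Delta+1$ angular sectors at each internal vertex for the parent direction, placing the parent ray in one of your positive-width gaps rather than merely off a measure-zero set). With that strengthening the incoming parent edge arrives from the complementary sector and can only meet the subtree at $c_i$, and the rest of your argument, including the rigid-rotation treatment of the $180$-degree condition, goes through.
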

\begin{proof} Without loss of generality, assume that $\Delta$ is even; Let $\Delta' = \Delta+1$. For $0\leq j \leq h$, let $N_j$ be the set of all nodes of the tree at depth $j$ - so $N_0$ is the singleton set containing the root and $N_h$ is the set of all leaves of \T\ where \T\ is the complete $\Delta'$-ary tree of depth $h$. Clearly, $\{N_j\}$ is a partition of the set of vertices. Additionally, assume that for each $N_j$, nodes are labelled in consecutive order (inductively), i.e., if $v_j(k) \in N_j$, then its parent is the node labelled $v_{j-1}({\big\lfloor \frac{k-1}{\Delta'} \big\rfloor+1})$ and its children are exactly the nodes of the form $v_{j+1}(\Delta'(k-1)+l), l \in \{1, \ldots, \Delta'\}$. 

Our strategy is to arrange all the nodes in concentric circles so that nodes of a given depth all lie on the same circle. The edges embedded are just the line segments connecting a parent to its child. We first create a $(\Delta+1)$-ary tree and then remove some vertices such that we have a $\Delta$-ary tree with the required properties. We inductively arrange the children of a given parent, then, we remove one out of the $\Delta'$ children. To be precise, we remove the child which, if present, is part of the edge that is closest (and in certain instances possibly equal) to forming 180 degrees (or $\pi$ radians) with the grandparent-parent edge (Notice here that there is precisely one child, if at all, which can be collinear with grandparent and parent). Thus, at each stage, since we remove exactly one child per parent off the $\Delta'$-ary tree, we end up finally with a $\Delta$-ary tree of depth $h$.
In detail, we embed \T\ inductively in the Euclidean plane as follows:

Firstly, we place the root $r$ of the tree at the origin. Next, place the $\Delta'$ nodes of $N_1$ equally spaced on the circumference of the first circle in clockwise order. Note that no two vertices of $N_1$ lie on the same diameter since $\Delta'$ is odd. Now, assume that the nodes in $\bigcup_0^q N_j, q \geq 1$ have all been placed. Thus, the nodes of up to depth $q$ all lie on or within some circle - let its radius be $R_q$. We now position the set of nodes in $N_{q+1}$ on the circle $\mathcal{C}_{q+1}$ centered on the root vertex and having radius $R_{q+1} > R_q$. Consider an embedded node $x = v_q(k)$ where $k \in \{1, 2, \ldots, (\Delta')^q\}$. Let its immediate clockwise neighbor in the embedding be $y$. Consider the portion of $\mathcal{C}_{q+1}$ within the conic region formed by the rays $\overrightarrow{rx}$ and $\overrightarrow{ry}$. Let $z = \mathcal{C}_{q+1} \cap \overrightarrow{rx}$ and $w = \mathcal{C}_{q+1} \cap \overrightarrow{ry}$. We partition the arc $\widehat{z w}$ of circle $\mathcal{C}_{q+1}$ into $\Delta'$ smaller arcs, namely, $\widehat{z z_1}, \widehat{z_1 z_2}, \ldots, \widehat{z_{\Delta'-1} z_{\Delta'}}$ where $z_{\Delta'} = w$. We identify the child $v_{q+1}(\Delta'(k-1)+l)$ with the point $z_l$ where $l \in \{1, \ldots, \Delta'\}$.

Note that the edges $xz_l$ and $xz_m$ cannot be parallel by this construction. In fact, the only potential for two adjacent edges (sharing vertex $x$) to be parallel is if $g = v_{q-1}({\big\lfloor \frac{k-1}{\Delta'} \big\rfloor+1}), x, z_l$ are collinear for some $l \in \{1, \ldots, \Delta'\}$. However, such a scenario of collinearity with $g, x$ is true for at most one value of $l \in \{1,2, \ldots, \Delta'\}$ - we ensure the non-parallel nature of adjacent edges for the partial embedding obtained so far by deleting the vertex $z_l$ which lies closest to the point $\overrightarrow{gx} \cap \widehat{zw}$ (and first to appear along $\widehat{zw}$ moving in the clockwise direction). Thus, by proceeding inductively, we obtain a planar embedding of a $\Delta$-ary tree of depth $h$.
\end{proof}

Now that we know how to embed \T, we construct a polygon $P$ from this euclidean embedding as follows:
Let \thick(\T) denote the set of all points that are strictly within $\epsilon$ distance of any point of \embed{\T}. %
By replacing all the circular arcs in the boundary (i.e., the closure of \thick(\T)) by line segments joining the endpoints of the arcs, we obtain a polygon $\mathcal{P}$ for a suitably chosen small enough value of $\epsilon$. The vertices of Tree \T\ correspond to convex {\it chambers} in polygon $P$ and correspondingly, the edges of \T\ are transformed into rectangular {\it corridors} by construction. %

We provide a reduction for the problem of solving the collaborative tree exploration \cite{DISSER2018} into one of guarding of a polygon via exploration in the distributed setting. %
Notice that due to the particular embedding of \T\ and the thickening procedure, the tree \T\ intuitively replicates the medial axis of the polygon.
(This is taken care by the properties of \embed{\T} shown in Lemma \ref{lb:embed}.) Joining the tree nodes with the required convex vertices in a proper fashion, exactly determines the medial axis. As such the diameter of the tree \T\ (represented by $d_{\T}$) differs from $P$'s medial axis diameter by a small constant (at most 2). Thus,  $d_{\T} \in \Theta(D)$.

Each {\it chamber} and {\it corridor} requires at most one agent in them as far as vertex guarding is concerned. Moreover, if, say $k \geq 1$ guards are placed in any one of these regions (to ensure overall visibility graph connectivity moving from {\it chamber} to {\it corridor} or vice-versa), they form a $k$-clique in terms of the connectivity structure of the visibility sub-graph for these $k$ guards in \G (i.e. each of the $k$ guards are visible to one another). %
Note that, to move from a guard in a particular {\it chamber} to one in a {\it corridor} or vice-versa, requires a traversal of at most 3 links in the visibility graph of \G\, namely, one link to move from the guard in a {\it chamber}'s clique to the guard in the chamber linked to a guard in the required {\it corridor}, the {\it chamber-corridor} link, and then one more link to the required guard in that {\it corridor}'s clique. %
Additionally, the sequence of {\it chambers} and {\it corridors} visited along the $\tilde{d}_v$ diameter realizing path is the same as the sequence of the corresponding nodes and edges of \T's diameter realizing path. Thus, we see that for this construction of $P$, $c \cdot d_{\T} \geq \tilde{d}_v \geq d_{\T}$ where $c$ is a small constant\footnote{ the later inequality follows trivially from the necessity of there being at least one guard in each chamber and corridor to ensure a minimal v-configuration of $P$.} giving us $d_{\T} \in \Theta(\tilde{d}_v)$. %

We see that given the tree to be explored has $n$ nodes, then, the obtained polygon $\mathcal{P}$ has at most $4(n-1)$ vertices. As the agents explore and traverse through the {\it corridors} of the polygon, it is equivalent to the agent traversing an edge in the tree exploration. %
Since we know, for $D \in o(\log n)$ (resp. $\tilde{d}_v \in o(\log n)$), from \cite{DISSER2018} (Theorem 2.5) that exploring a tree  of $n$ vertices using $\Theta(n)$ agents takes $\Omega(d_\T^2) = \Omega(D^2)$ time (or $\Omega(d_\T^2) = \Omega(\tilde{d}_v^2)$ time resp.), any algorithm for polygon exploration guard placement with $\Theta(n)$ agents on some polygon of $4(n-1)$ vertices requires $\Omega(D^2)$ (or $\Omega(\tilde{d}_v^2)$) rounds.
}

\section{Conclusion and Future Works}\label{conclusion}
\enlargethispage{2\baselineskip}
In this paper, we have presented centralized and distributed algorithms for computing a visibly connected guard placement. Crucially, our algorithms take time that is quadratic in a couple of different notions of diameters of $P$, i.e., $\tilde{d}_v$ and $D$.  We believe that $\tilde{d}_v \in O(D)$, thereby obviating the need for precise depth perception, but the proof has eluded us. It would be nice to establish this formally as this would lend to understanding the trade off between LiDAR and photogrammetry (see~\cite{FLR19,LorP} for example). We also remark that our algorithms have been explained in the synchronous setting for the purpose of clarity, but they can be easily extended to the asynchronous setting.

Additionally, it will be interesting to extend our works to more complicated structures like polygons with holes or polyhedra in dimensions greater than 2.

\paragraph{Acknowledgements} Barath Ashok, and John Augustine were supported in part by DST/SERB Extra Mural Grant (file number EMR/2016/00301) and DST/SERB MATRICS Grant (file number MTR/2018/001198). Suman Sourav was supported in part by the National Research Foundation, Prime Minister’s Office, Singapore under the Energy Programme and administrated by the Energy Market Authority (EP Award No. NRF2017EWT-EP003-047). Part of this work was done when Srikkanth Ramachandran and Suman Sourav visited IIT Madras. We thank Rajsekar Manokaran for pointing out LiDAR and Photogrammetry. 

\small
\bibliography{references}

\end{document}